\newtheorem{theorem}{Theorem}
\newtheorem{lemma}{Lemma}
\newcommand{\defeq}{\stackrel{\Delta}{=}}
\begin{document}
\title{Eigen-Based Transceivers for the \\
MIMO Broadcast Channel with \\
Semi-Orthogonal User Selection}
\author{Liang Sun, \emph{Student Member}, \emph{IEEE} and Matthew R.\ McKay, \emph{Member}, \emph{IEEE}
\thanks{Copyright (c) 2010 IEEE. Personal use of this material is permitted.
However, permission to use this material for any other purposes must
be obtained from the IEEE by sending a request to
pubs-permissions@ieee.org.} \thanks{Manuscript received Oct.\ 27,
2009; revised Feb.\ 10, 2010. The associate editor coordinating the
review of this manuscript and approving it for publication was Dr.
Ali Ghrayeb. L.\ Sun and M. R.\ McKay are with the ECE Department,
Hong Kong University of Science and Technology, Hong Kong. (Email:
sunliang@ust.hk; eemckay@ust.hk). The work of L. Sun and M. R.\
McKay was supported by the Hong Kong Research Grants Council (RGC)
under grant no.\ 617108. This work was presented in part at the IEEE
Global Communications Conference (Globecom), Honolulu, USA,
December, 2009.} } \maketitle

\begin{abstract}
This paper studies the sum rate performance of two low complexity
eigenmode-based transmission techniques for the MIMO broadcast
channel, employing greedy semi-orthogonal user selection (SUS). The
first approach, termed ZFDPC-SUS, is based on zero-forcing dirty
paper coding; the second approach, termed ZFBF-SUS, is based on
zero-forcing beamforming. We first employ new analytical methods to
prove that as the number of users $K$ grows large, the ZFDPC-SUS
approach can achieve the optimal sum rate scaling of the MIMO
broadcast channel. We also prove that the average sum rates of both
techniques converge to the average sum capacity of the MIMO
broadcast channel for large $K$. In addition to the asymptotic
analysis, we investigate the sum rates achieved by ZFDPC-SUS and
ZFBF-SUS for finite $K$, and show that ZFDPC-SUS has significant
performance advantages.  Our results also provide key insights into
the benefit of multiple receive antennas, and the effect of the SUS
algorithm. In particular, we show that whilst multiple receive
antennas only improves the asymptotic sum rate scaling via the
second-order behavior of the multi-user diversity gain; for finite
$K$, the benefit can be very significant. We also show the
interesting result that the semi-orthogonality constraint imposed by
SUS, whilst facilitating a very low complexity user selection
procedure, asymptotically does not reduce the multi-user diversity
gain in either first $\left(\log K\right)$ or second-order
$\left(\log\log K\right)$ terms.
\end{abstract}

\newpage
\section{Introduction}
In the multiple-input multiple-output (MIMO) broadcast channel, the
spatial multiplexing capability of multiple transmit antennas can be
exploited to efficiently serve multiple users simultaneously, rather
than trying to maximize the capacity of a single-user link. The
capacity region of the MIMO broadcast channel has now been
well-studied
\cite{W_Yu02,Viswanath03,Weingarten06,Vishwanath03,Caire03}, and has
been shown to be achieved through the use of multiple antenna dirty
paper coding (DPC) \cite{Weingarten06}.
Unfortunately, optimal DPC is a highly non-linear technique
involving joint optimization over a set of power-constrained
covariance matrices, and is therefore too complex for practical
implementation \cite{Vishwanath03}. A reduced complexity sub-optimal
DPC scheme, known as zero-forcing dirty paper coding (ZFDPC), was
proposed for single-antenna users in \cite{Caire03}, and generalized
to multiple-antenna users in \cite{Love07}, which is based on a QR
decomposition of the channel matrix.

To further reduce complexity, linear processing schemes such as
beamforming (BF) have also attracted a lot of attention.
The zero-forcing beamforming (ZFBF) scheme was first introduced for
single-antenna users in \cite{Caire03}, and further modified in
\cite{Peel05} and \cite{Hochwald05}. In \cite{Spencer04}, the
concept of block-diagonalization was proposed for multiple-antenna
users, which completely cancels the inter-user interference by
employing a set of precoding matrices.
One key limitation of these techniques is that, for ZFDPC and ZFBF,
the maximum number of users that can be supported must be no more
than the number of transmit antennas, whereas for
block-diagonalization, the number of the transmit antennas must be
larger than the aggregate number of receive antennas across all
users. This is significant, since the number of users in practice
can be large.

When the number of users $K$ is larger than the number of transmit
antennas $M$, one must select a subset of users in the system. A
common approach is to seek the subset of users which yields the
maximum sum rate. The complexity of finding the optimal subset,
however, can be prohibitively large, and to reduce complexity greedy
algorithms are commonly employed (see e.g.,
\cite{ZTu03,Dimic05,Shen06}). A promising way to further reduce the
complexity of user selection is to restrict the searching space of
users by imposing some constraint on the channels of the selected
users. Following this method, \cite{Yoo06} proposed a
semi-orthogonal user selection (SUS) algorithm which iteratively
searches for users with nearly orthogonal channel
directions\footnote{More specifically, two complex vectors
$\mathbf{u}$ and $ \mathbf{v}$, with unit norm, are said to be
semi-orthogonal if $|\mathbf{u}^H \mathbf{v}|^2< \delta$, where
$\delta$ is referred to as the \emph{semi-orthogonality
parameter.}}.


In this paper, we consider low complexity transmission and user
selection techniques for the MIMO broadcast channel with
multiple-antenna users. It is still not clear how much advantage can
be gained by employing multiple-antennas at the user terminals.
%
Some recent exceptions which deal with the multiple-antenna user
scenario are presented in \cite{Mohammad08} and \cite{Alireza08}.
Particularly, \cite{Mohammad08} proposed a generalized G-ZFDPC
approach, based on the idea of eigenmode transmission
(eigen-beamforming). 
A limitation of that approach is the relatively high complexity,
since it requires numerical optimization of certain system
parameters.
In \cite{Alireza08}, a thresholding technique based on the channel
singular values was proposed, and necessary and sufficient
conditions were given to achieve the optimum sum capacity of DPC as
$K \rightarrow \infty$. However, for that scheme, the optimal
threshold must be computed by exhaustive search, and is once again
quite complicated when the number of users is not small.


In this paper, we investigate two low complexity
eigen-beamforming-based transceiver structures for the MIMO
broadcast channel with multiple-antenna users, combined with a
greedy SUS algorithm. The first technique is a generalization the
G-ZFDPC approach in \cite{ZTu03} to account for multiple-antenna
users and combine it with SUS. We refer to this technique as
ZFDPC-SUS. The second technique is a generalization of the algorithm
proposed in \cite{Yoo06}, which we refer to as ZFBF-SUS. For both
techniques, we present an asymptotic performance analysis of the sum
rate (as in
\cite{Sharif05,Sharif07,Mohammad08,Alireza08,Love07,Yoo06}) as the
number of users grows large. In particular, by employing novel
analytical techniques, we demonstrate that ZFDPC-SUS achieves the
optimal sum capacity scaling of the MIMO broadcast channel as the
number of users grows large. In addition, we prove the more powerful
result that the difference between the sum rate of ZFDPC-SUS and the
sum capacity of the MIMO broadcast channel converges to zero. We
also establish a similar result for ZFBF-SUS. In addition to the
asymptotic analysis, we also investigate the sum rates achieved by
ZFDPC-SUS and ZFBF-SUS for finite $K$, for high and low
signal-to-noise ratios (SNR). Based on our analytical results, we
establish a number of important insights.  For example, we
demonstrate that by employing multiple-antennas at the user
terminals only affects the asymptotic sum rate scaling via the
second-order behavior of the multi-user diversity gain. Thus, the
improvement due to having multiple receive antennas at the terminals
is much less than that of having multiple transmit antennas, which
provides linear capacity growth through spatial multiplexing gain.
However, for finite $K$, we show that the performance improvement
due to multiple receive antennas can still be very significant. We
also establish key insights into the design of the
semi-orthogonality parameter used in the SUS algorithm. In
particular, it has been claimed previously that the
semi-orthogonality constraint will cause multi-user diversity gain
reduction \cite{Yoo06}. However, through our asymptotic analysis, we
show that if some very mild conditions on the semi-orthogonality
constraint are met, then the semi-orthogonality parameter \emph{does
not} reduce the multi-user diversity gain in either first or second
order, for both ZFDPC-SUS and ZFBF-SUS. It seems that this
conclusion cannot be established by using previous analytical
methods for SUS \cite{Yoo06}. Our analysis also leads to practical
design guidelines for selecting the semi-orthogonality parameter for
finite numbers of users, in order to intelligently trade off
complexity and performance. Our analysis also demonstrates that for
finite values of $K$, ZFDPC-SUS can significantly outperform
ZFBF-SUS.

\section{Channel and System Model}\label{sec:model}

We consider a MIMO broadcast channel with $M$ transmit antennas and $K$ users, with $K \geq M$. User $k$ is equipped
with $N_k$ antennas. In a flat-fading environment, the baseband
model of this system is
\begin{equation}\label{eq:channel_model}
\mathbf{y}_k = \mathbf{H}_k \mathbf{s} + \mathbf{n}_k, ~ 1 \leq
k\leq K ,
\end{equation}
where $\mathbf{y}_k \in \mathcal{C}^{N_k \times 1}$ is the received
signal vector of user $k$, $\mathbf{H}_k \in \mathcal{C}^{N_k \times
M}$ denotes the channel matrix from the transmitter to user $k$,
$\mathbf{s} \in \mathcal{C}^{M \times 1}$ represents the transmit
signal vector, designed to meet the total power constraint
$\text{Tr}( \mathcal{E}\{ \mathbf{s}\mathbf{s}^{H} \}) \leq P$, and
$\mathbf{n}_k \in \mathcal{C}^{N_k \times 1}$ is white Gaussian
noise with zero mean and covariance matrix $\mathbf{I}_{N_k}$.
Throughout the paper, we assume (as in
\cite{Caire03,Yoo06,Love08,Mohammad08}) that (i) the channels of all
users are subject to uncorrelated Rayleigh fading and, for
simplicity, all users are homogeneous and experience statistically
independent fading, (ii) the transmitter has perfect CSI of all
downlink channels\footnote{This assumption is reasonable in time
division duplex (TDD) systems, which allows the transmitter to
employ reciprocity to estimate the downlink channels.}, and (iii)
each user only has access to their own CSI, but not the
CSI of the downlink channels of the other users.

The transmitter supports $L \leq M$ simultaneous data streams,
shared by at most $L$ selected users
(active users), which are indexed by $\pi(i),~i=1,2,\cdots, L$.
(Note that the specific user selection algorithm will be discussed
in Section \ref{sec:algorithm}.) The transmitted signal vector is
represented as
\begin{equation} \label{eq:transmitSignal}
\mathbf{s} = \mathbf{W} \mathbf{P}^{\frac{1}{2}} \mathbf{x} ,
\end{equation}
where $\mathbf{x}=[x_1, x_2, \cdots, x_L]^T$ collects the zero-mean
circularly symmetric complex Gaussian information signals for each
of the $L$ data streams, satisfying $\mathcal{E}\{ \mathbf{x}
\mathbf{x}^H\} = \mathbf{I}_{L}$, $\mathbf{P} = \text{diag} \{ p_1,
p_2, \cdots, p_L \}$ accounts for the power loading across the
multiple streams, chosen to satisfy $\sum_{i=1}^{L}p_i \leq P$, and
$\mathbf{W} = [\mathbf{w}_1,\mathbf{w}_2,\cdots, \mathbf{w}_L] \in
\mathcal{C}^{M \times L}$ represents the precoder matrix, with
$\mathbf{w}_i $ denoting the beamforming vector for the $i$-th
stream (i.e.\ for user $\pi(i)$), normalized to satisfy $
\|\mathbf{w}_2\|^2 = 1$. Note that with this formulation, a given
user may be assigned multiple data streams.

From (\ref{eq:transmitSignal}), the received signal vector for user
$k$ can be rewritten as
\begin{equation}
\mathbf{y}_k = \mathbf{H}_k \mathbf{W}
\mathbf{P}^{\frac{1}{2}}\mathbf{x}+ \mathbf{n}_k.
\end{equation}
It is convenient to represent $\mathbf{H}_k$ via its singular value
decomposition (SVD) $\mathbf{H}_k = \mathbf{U}_k \mathbf{\Sigma}_k
\mathbf{V}_k^H $, where $\mathbf{\Sigma}_k$ is a $N_k \times M$
diagonal matrix containing the singular values of $\mathbf{H}_k$ in
decreasing order along its main diagonal, and $\mathbf {U}_k =
[\mathbf{u}_{k,1},\mathbf{u}_{k,2}, \cdots, \mathbf{u}_{k,N_k} ]\in
\mathcal{C}^{N_k \times N_k}$ and $\mathbf{V}_k
=[\mathbf{v}_{k,1},\mathbf{v}_{k,1},\cdots, \mathbf{v}_{k,M}] \in
C^{M \times M}$ are unitary matrices with $\mathbf{u}_{k,j}$ and
$\mathbf{v}_{k,j}$ representing the left and right singular vectors
corresponding to the $j$-th largest singular value
$\sqrt{\lambda_{k,j}}$.

To detect the data stream $i$, user $\pi(i)$ left multiplies the
received vector by 
$\mathbf{u}_{\pi(i),d_i}$ as follows
\begin{eqnarray} \label{eq:processedRx}
r_{\pi(i), d_i} &=& \mathbf{u}^H_{\pi(i),d_i} \mathbf{y}_{\pi(i)}
\nonumber\\&=& \sqrt{\lambda_{\pi (i), d_i}} ~\mathbf{v}^H_{\pi(i),
d_i} \mathbf{W} \mathbf{P}^{\frac{1}{2}}\mathbf{x} +
\tilde{n}_{\pi(i),d_i},~~
\end{eqnarray}
where $\tilde{n}_{\pi(i),d_i} = \mathbf{u}_{\pi(i),d_i}^H
\mathbf{n}_{\pi(i)} \sim \mathcal{CN}(0,1)$ is the effective
additive white Gaussian noise after processing, and $d_i$ denotes
the \emph{eigen-mode index} for stream $i$, chosen according to the
selection procedure outlined in Section \ref{sec:algorithm}.
Collecting the processed signals (\ref{eq:processedRx}) for each of
the $L$ data streams, we may write 
\begin{eqnarray}\label{eq:decoded_signal}
\mathbf{r} = \mathbf{C}_{\pi, d} \mathbf{W}
\mathbf{P}^{\frac{1}{2}}\mathbf{x} + \tilde{\mathbf{n}} =
\mathbf{\Lambda}_{\pi, d}^{\frac{1}{2}} \mathbf{\Xi}_{\pi, d}
\mathbf{W} \mathbf{P}^{\frac{1}{2}}\mathbf{x} + \tilde{\mathbf{n}} ,
\end{eqnarray}
where $\mathbf{C}_{\pi, d} = [\mathbf{c}_{\pi(1), d_1}^T
\mathbf{c}^T_{\pi(2), d_2}, ~\cdots, \mathbf{c}_{\pi(L), d_L}^T]^T$
is the composite channel matrix for the selected users and
eigen-channel set with $i$-th row vector $\mathbf{c}_{\pi(i), d_i} =
\sqrt{\lambda_{\pi (i), d_i}} \mathbf{v}_{\pi(i),d_i}^H$,
$\tilde{\mathbf{n}} = [~ \tilde{n}_{\pi(1), d_1}, \tilde{n}_{\pi(2),
d_2}, \cdots, \tilde{n}_{\pi(L), d_L}]^T $, $\mathbf{\Lambda}_{\pi,
d} = \text{diag}\{\lambda_{\pi (1), d_1}, \cdots, \lambda_{\pi (L),
d_L}\}$, and $\mathbf{\Xi}_{\pi, d} = [\mathbf{v}_{\pi(1),d_1},
\cdots, \mathbf{v}_{\pi(L),d_L} ]^H$.


In the next section, we will describe several transceiver
structures, as well as a greedy method for selecting the set of
active users $\pi =\{\pi(1), \cdots, \pi(L)\}$ and the corresponding
eigen-channels (active eigen-channels) $d = \{d_{1}, \cdots,
d_{L}\}$.


\section{Transceiver Structures and User Selection Algorithm}\label{sec:algorithm}
\subsection{Greedy Zero-Forcing Dirty Paper Coding Algorithm}\label{sec:GZFDPC}

In this subsection, we present a transmission strategy which jointly
combines ZF, DPC, and eigen-beamforming, along with a greedy low
complexity SUS scheduling algorithm. Henceforth, this strategy will
be termed ZFDPC-SUS. To the best of our knowledge this scheme has
not been considered before. We note, however, that it is an
extension of the ZFDPC strategy considered in
\cite{Caire03,ZTu03,Love08} to account for multiple receive
antennas, and also a variation of the algorithm discussed briefly in
\cite[Sect.\ VIII]{Yoo06}.




Let $\mathbf{\Xi}_{\pi, d} = \mathbf{L}_{\pi, d}\mathbf{Q}_{\pi, d}$
denote the QR decomposition of $\mathbf{\Xi}_{\pi, d}$, where
$\mathbf{L}_{\pi, d}$ is a $L\times L$ lower triangular matrix with
$(i,j)$-th entry $l_{i,j}$, and $\mathbf{Q}_{\pi,
d}=[\mathbf{q}_1^T, \cdots, \mathbf{q}_L^T]^T$ is a $L \times M$
matrix with orthonormal rows ($\mathbf{q}_i$ denotes the $i$-th row
vector). The transmit precoder matrix is chosen as
\begin{eqnarray}\label{eq:precoder}
\mathbf{W}= \mathbf{Q}_{\pi, d}^H.
\end{eqnarray}
Then, (\ref{eq:decoded_signal}) yields a set of interference
channels
\begin{equation}\label{eq:decoded_signal2}
r_{\pi(i), d_i} = \sqrt{\lambda_{\pi(i), d_i}} \big(\sqrt{p_i} ~
l_{i,i} x_i + \sum_{j<i} \sqrt{p_j}~ l_{i,j} x_j\big) +
\tilde{n}_{\pi(i), d_i}.
\end{equation}
From (\ref{eq:decoded_signal2}), if $i<j$, there is no interference
at receiver $\pi(i)$ from data stream $j$. For $i>j,$ the
interference term $\sum_{j<i} \sqrt{p_j}~ l_{i,j} x_j$ is
precanceled at the transmitter by using DPC. Then, the output SNR at
receiver $\pi(i)$ for data stream $i$ is given by
\begin{equation}\label{eq:SNR}
\zeta_{\pi (i), d_i} = p_i \gamma_{\pi(i), d_i}
\end{equation}
where $\gamma_{\pi(i), d_i} = \lambda_{\pi(i), d_i} \beta_i$, with
$\beta_i=|l_{i,i}|^2$.

Given the optimal user set $\pi$ and the corresponding eigen-channel set
$d$, the sum rate has the form
\begin{equation}\label{eq:throughtput}
R_{\text{ZFDPC-SUS}} = \max_{p_i: \sum_{i=1}^{L} p_i \leq P }
\sum_{i=1}^{L} \log_2 (1 + p_i \gamma_{\pi(i), d_i}).
\end{equation}
To maximize (\ref{eq:throughtput}), the power should be allocated
according to the standard water-filling algorithm.

Now consider the problem of selecting the optimal user set $\pi$ and
corresponding eigen-mode index set $d$.  These sets are chosen to maximize the sum rate, given by
(\ref{eq:throughtput}). When $M<K$, to find the optimal solution,
one must apply an \emph{exhaustive search} over all possible $L$,
and for each $L$, over all possible sets of $L$ subchannels taken
from the set of $\sum_{k=1}^{K}\text{min}\{M,N_k\}$ available
eigen-channels spanned by all $K$ users. Thus, the total number of
possible user and eigen-channel selection sets is given by
$\sum_{l=1}^{M}\binom{\sum_{k=1}^{K} \text{min}\{M,N_k\}}{l}$.
Further, since different orderings of a given set will yield
different output SNRs, all permutations of a given set must also be
considered.  Clearly, the complexity associated with this exhaustive
search is computationally prohibitive in practice, for all but small
values of $K$.

Here we consider a user and eigen-mode selection algorithm with
significantly lower complexity, based on SUS.  This algorithm, which
was first presented in \cite{Yoo06} in the context of ZFBF,
iteratively selects a user-eigenmode index pair by searching for a
set of users with near orthogonal channel vectors, and is described as follows.
Let $\mathcal{U}_n$ denote the \emph{candidate set} at the $n$-th
iteration. This set contains the indices of all users and the
corresponding eigen-channels that have not been selected previously,
and which have not been pruned in the previous iterations (i.e.,
they have satisfied the ``semi-orthogonality criteria'' in each of
the previous iterations). Also, let $\mathcal{S}_n= \{(\pi(1),d_1),
\cdots, (\pi(n),d_n)\}$ denote the set of indices of the selected
users and the corresponding eigen-channels after the $n$-th iteration.\\
\textbf{ZFDPC-SUS (Algorithm 1)}
\begin{enumerate}
\item
\textbf{Initialization:}\\
Set $n=1$ and $ \mathcal{U}_1 = \{(k,j)| ~k=1,2,\cdots,K; \, j=1,2,\cdots, \textrm{min}(N_k, M) \}$. \\
Let $\gamma_{k,j}(1) = \lambda_{k,j}$. The transmitter selects the first user and eigen-channel pair as follows:
\begin{eqnarray}\label{eq:selection1}
(\pi(1),d_1) = \textrm{arg} \max_{(k,j) \in \mathcal{U}_1}
~\gamma_{k,j}(1) \, .
\end{eqnarray}
Set 
$\mathcal{S}_1 = \{(\pi(1),d_1)\} $, and define $\mathbf{q}_1 =
\mathbf{v}_{\pi(1), d_1}^H$.
\item
\textbf{While} $n \leq M$, $n \leftarrow n+1$. \\
Calculate candidate set 
as
\begin{eqnarray}\label{set:semi-selection}
\mathcal{U}_{n} & =& \{ (k, j) | (k, j) \in \mathcal{U}_{n-1},
\nonumber
\\ && \hspace{-1cm}(k, j) \neq (\pi(n-1),d_{n-1}), | \mathbf{v}_{k,j}^H \;
\mathbf{q}_{n-1}^H|^2 < \delta \}\nonumber
\end{eqnarray}
where $\delta$ is a positive constant, termed the \emph{semi-orthogonality parameter}, that is preset before the
start of the selection procedure.

If $\mathcal{U}_n$ is empty,
set $n= n-1$ and go to step 3).
Otherwise, for each $(k,j) \in \mathcal{U}_n$, denote 
\begin{eqnarray}
\label{eq:Gram-Schmidt0}
\xi_{i} &=& \mathbf{v}_{k,j}^H\mathbf{q}_{i}^H,~~ i=1, \cdots, n-1\\
\label{eq:Gram-Schmidt1}
\boldsymbol{\xi}_{k,j} &=&
\mathbf{v}_{k,j}^H - \sum_{i=1}^{n-1}\xi_{i}\mathbf{q}_{i}\\
\label{eq:gamma_k} \gamma_{k,j}(n) &=&
\lambda_{k,j} \parallel\boldsymbol{\xi}_{k,j}\parallel^2.
\end{eqnarray}
Select the $n$-th active user and corresponding eigen-channel as
follows:
\begin{eqnarray}\label{eq:selection2}
\{(\pi(n), d_{n})\} &=& \text{arg} \max_{(k, j) \in \mathcal{U}_n}
\gamma_{k,j}(n) \, .
\end{eqnarray}
Set \begin{eqnarray} \mathcal{S}_n = \mathcal{S}_{n-1} \cup
{\{(\pi(n), d_{n})\}} , \nonumber \\ \mathbf{q}_n =
\frac{\boldsymbol{\xi}_{\pi(n), d_{n}}}{
\parallel\boldsymbol{\xi}_{\pi(n), d_{n}}\parallel}.  \label{eq:Gram-Schmidt2}
\end{eqnarray}
\item
The transmitter informs the selected users of the indices of their
selected eigen-channels; then performs DPC, beamforming, and
water-filling power allocation, as discussed previously.
\end{enumerate}
Note that this procedure applies Gram-Schmidt orthogonalization to the ordered rows of $\mathbf{\Xi}_{\pi, d}$, as
described by (\ref{eq:Gram-Schmidt0}), (\ref{eq:Gram-Schmidt1}) and
(\ref{eq:Gram-Schmidt2}).  As such, it also computes the required transmit precoding matrix in
(\ref{eq:precoder}).

Observe the following important relations. According to the
QR decomposition of $\mathbf{\Xi}_{\pi, d}$,
\begin{eqnarray}\label{eq:QR2}
\mathbf{v}_{\pi(n), d_n}^H = ~ (\mathbf{v}_{\pi(n), d_n}^H
\mathbf{q}_n^H )~\mathbf{q}_n + \sum_{j=1}^{n-1}(
\mathbf{v}_{\pi(n), d_n}^H \mathbf{q}_j^H ) \mathbf{q}_j,
\end{eqnarray}
and $l_{n,j} = \mathbf{v}_{\pi(n), d_n}^H \mathbf{q}_j^H$, for $
j<n$. With (\ref{eq:Gram-Schmidt1}),
\begin{eqnarray}\label{eq:beta_n}
\beta_n = |l_{n,n}|^2 = | \mathbf{v}_{\pi(n), d_n}^H \mathbf{q}_n^H
|^2 =
\parallel \boldsymbol{\xi}_{\pi(n), d_{n}} \parallel^2.
\end{eqnarray}
In addition, since $\|\mathbf{v}_{\pi(n), d_n}\|^2 = 1$ and
$\mathbf{q}_i, ~i=1,\cdots, L$ are orthonormal, it can be
easily shown that
\begin{eqnarray}\label{eq:sum_l}
\sum_{j=1}^{n} |l_{n,j}|^2 = 1, ~~~~\text{for}~ n=1,2,\cdots,L.
\end{eqnarray}



\subsection{Zero-Forcing Beamforming Algorithm}\label{sec:linear_processing}
The ZFDPC approach described in the previous section has
significantly lower complexity than full (capacity-achieving) DPC,
however it is still a nonlinear processing strategy, due to the
interference cancelation step. Thus, a common method for reducing
complexity even further is to remove the interference cancelation
and employ linear processing (linear beamforming). It is well-known,
however, that establishing the optimal linear beamforming vectors is
a very difficult non-convex optimization problem
\cite{H_Viswanathan03}. Instead, sub-optimal but simple linear
processing schemes are usually adopted. Here we will study ZFBF
which is one of the most popular linear strategies. Unless otherwise
indicated, we will employ the same notational symbols as used in the
previous sections.

Let $\mathbf{C}_{\pi, d}^{\dag}$ denote the Moore-Penrose inverse of
the equivalent channel matrix $\mathbf{C}_{\pi, d}$, i.e.,
$\mathbf{C}_{\pi, d}^{\dag} = \mathbf{C}_{\pi,d}^{H}
(\mathbf{C}_{\pi, d}\mathbf{C}_{\pi, d}^H)^{-1}$, and define ${\bf
\tilde{c}}_1, \ldots, {\bf \tilde{c}}_L$ as the columns of
$\mathbf{C}_{\pi, d}^{\dag}$. For ZFBF, the precoding matrix
$\mathbf{W} = [ \mathbf{w}_1, \ldots, \mathbf{w}_L ]$ is constructed
with the beamforming vectors $\mathbf{w}_i = \frac{{\bf {\tilde
c}}_i}{\parallel {\bf {\tilde c}}_i \parallel}$, for $i = 1, \ldots,
L$.
Note that this direct implementation of ZFBF requires the explicit
computation of the Moore-Penrose inverse of the channel matrix in
order to obtain the beamforming vectors. It has been shown in
\cite{Love08}, however, that this direct calculation can be
circumvented, thereby significantly reducing the computational
complexity. To this end, it is convenient to rewrite the
decomposition of $\mathbf{C}_{\pi, d}$ as $\mathbf{C}_{\pi, d} =
\mathbf{\Lambda}_{\pi, d}^{\frac{1}{2}} \mathbf{L}_{\pi, d}
\mathbf{Q}_{\pi, d}$, where $\mathbf{\Lambda} =
\text{diag}\{\lambda_{\pi(1), d_i},\cdots, \lambda_{\pi(L), d_L}\}$
and $\mathbf{L}_{\pi, d}, \mathbf{Q}_{\pi, d}$ are defined as in
Section \ref{sec:GZFDPC}.  Letting $\mathbf{T}_{\pi, d}=
\mathbf{L}_{\pi, d}^{-1} = [\mathbf{t}_1,\cdots,\mathbf{t}_L ] $,
assuming that $\mathbf{C}_{\pi, d}$ has full row rank, the
Moore-Penrose inverse $\mathbf{C}_{\pi, d}^\dagger$ can be written
as
\begin{align}
\mathbf{C}_{\pi, d}^\dagger =
\mathbf{Q}_{\pi, d}^H \mathbf{L}_{\pi, d}^{-1} \mathbf{\Lambda}_{\pi, d}^{-\frac{1}{2}} \; .
\end{align}
Note that calculating the inverse of $\mathbf{\Lambda}_{\pi,
d}^{\frac{1}{2}}$ is trivial (since it is diagonal), whereas the
inverse of $\mathbf{L}_{\pi, d}$ can be computed using a simple
iterative algorithm given in \cite[Eq. 11]{Love08}.

For ZFBF, the decoded signal for data stream $\pi(i)$ is easily shown to be given by
\begin{eqnarray}
r_{\pi(i), d_i} &=& \sqrt{p_i}~ \mathbf{c}_{\pi(i), d_i}
\mathbf{w}_{i} x_i
+ \tilde{n}_{\pi(i), d_i}\nonumber\\
&= & \frac{\sqrt{p_i~\lambda_{\pi(i), d_i}}}{\parallel \mathbf{t}_i
\parallel} x_i
+ \tilde{n}_{\pi(i), d_i}
\end{eqnarray}
with corresponding SNR
\begin{align}
\varrho_{\pi(i), d_i} = \frac{\lambda_{\pi(i),
d_i}}{\parallel \mathbf{t}_i \parallel^2} \; .
\end{align}

For the given user set $\pi$ and the corresponding eigen-channel set
$d$, the sum rate is given by
\begin{eqnarray}\label{eq:sum_rate_ZFBF}
R_{\text{ZFBF-SUS}}  = \max_{p_i: \sum_{i=1}^{L} p_i \leq P }
\sum_{i=1}^{L} \log_2(1+p_i \varrho_{\pi(i), d_i}) ,
\end{eqnarray}
where the optimal power allocation $\{p_i\}_{i=1}^L$ is obtained,
once again, by applying the waterfilling procedure.

For ZFBF, we consider a user and eigen-channel selection algorithm
based on SUS, following the same general procedure as in
\textbf{Algorithm 1}. Note that SUS has previously been applied to
ZFBF in \cite{Yoo06}. This algorithm typically assumes that each
user is equipped with a single receive antenna, however it extends
easily to the multiple receive antenna scenario considered in this
paper. One key difference between the algorithms in
\cite{Yoo06,Dimic05,Love08} are the specific methods employed for
selecting the ``best'' user in Step 2 of the algorithm. More
specifically, in \cite{Yoo06}, the same method was applied as in
(\ref{eq:selection2}), whereas \cite{Dimic05} applied a method based
on selecting one user at each iteration that results in the largest
sum rate when combined with previously selected users. Whilst the
latter method can result in larger sum rate, here we will consider
the former method for analytically tractability. It has been shown,
however, that the difference in sum rate between these two methods
is minor \cite{Love08}.

\section{Sum Rate Analysis -- Asymptotic $K$}\label{sec:analysis}
In this section, we investigate the average sum rate of each of the above transceiver
structures. For tractability, we make the following
assumptions throughout this section:
\begin{enumerate}
\item[(i)] For each user, only the principal
eigen-channel is considered. As such, we drop the indices for the selected eigen-channels  (for example,
we use $\gamma_{\pi(i)}$ instead of $\gamma_{\pi(i), d_i}$).

\item[(ii)] The available power $P$ is divided equally amongst
the active users\footnote{Note that in practice the transmit power
may be optimized (e.g., according to the water-filling strategy). In
such cases, the power allocation depends on the instantaneous
channel coefficients and thus changes at the fading rate of the
channel, which makes the analysis intractable.}.
\end{enumerate}
Clearly, the sum rate achieved under these two assumptions will
serve as a lower bound to the maximum achievable sum rate. We will
also assume that each user has $N$ antennas, and that there are $L =
M$ data streams.


We will investigate the average sum rate of both scheme discussed in the previous section.  We focus on
establishing asymptotic results as $K \to \infty$, whilst keeping SNR, $M$, and $N$ fixed.


\subsection{ZFDPC-SUS Scheme}\label{sec:perforance_GZFDPC}
To analyze the sum rate of the ZFDPC-SUS system, we require the
distribution of the output SNR $\zeta_{\pi(n)}$, or alternatively
the distribution of $\gamma_{\pi(n)}$. Let us first determine the
distribution of $\gamma_{k} (n)$, $n = 1, \cdots, M$, where $k$ is
an \emph{arbitrary} user selected from the candidate set
$\mathcal{U}_n$.

Starting with~$n=1$, $\gamma_k(1)$, $k = 1, \ldots, K$, are
independent and identically distributed (i.i.d.), with
\begin{eqnarray}
\gamma_{k} (1)= \lambda_{k,\text{max}}
\end{eqnarray}
where $\lambda_{k,\text{max}}$ is the maximum eigenvalue of
$\mathbf{H}_k^H \mathbf{H}_k$, whose probability density function
(p.d.f.) and cumulative distribution function (c.d.f.) are known in
closed-form and are given as follows\cite{Mallik03}:
\begin{lemma}\label{lemma:max_eigen_pdf}
Let $\mathbf{H} \sim \mathcal{CN}_{N,M} (\mathbf{0}_{N,M},
\mathbf{I}_{N}\otimes \mathbf{I}_{M})$. The matrix $\mathbf{H}^H
\mathbf{H}$ is complex Wishart, whose maximum eigenvalue has p.d.f.\
\begin{eqnarray}\label{eq:wishart_pdf}
f_{\text{max}} (x) = \sum_{r=1}^{p} \sum_{s=q-p}^{(p+q-2r)r}
a_{r,s}~ x^{s} e^{-r x}
\end{eqnarray}
and c.d.f.\
\begin{eqnarray}\label{eq:wishart_cdf}
F_{\text{max}} (x) = \sum_{r=1}^{p} \sum_{s=q-p}^{(p+q-2r)r}
\frac{a_{r,s}}{r^{s+1}} \gamma (s+1, r x)
\end{eqnarray}
where $ p = \min\{M,N\}$, $q = \max\{M,N \}$, $a_{s,r}$ is a
constant (dependent on $M$ and $N$) which can be computed using the
simple numerical method in \cite{Maaref05}, and $\gamma(\cdot,
\cdot)$ is the lower incomplete gamma function.
\end{lemma}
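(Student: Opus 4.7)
The plan is to derive both expressions from the joint eigenvalue density of the complex Wishart matrix $\mathbf{H}^H\mathbf{H}$. Let $\lambda_1,\ldots,\lambda_p$ denote its $p=\min(M,N)$ nonzero eigenvalues and set $q=\max(M,N)$. A standard calculation gives their unordered joint density as proportional to $\prod_i \lambda_i^{q-p} e^{-\lambda_i} \prod_{i<j}(\lambda_i-\lambda_j)^2$ on the positive orthant, with a normalizing constant expressible as a product of factorials.

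Since $\{\lambda_{\max} \leq x\}$ coincides with the event that every $\lambda_i$ lies in $[0,x]$, I would obtain $F_{\max}(x)$ by integrating this joint density over $[0,x]^p$. Writing the squared Vandermonde factor as $(\det[\lambda_i^{j-1}])^2$ and invoking the Andreief (Cauchy--Binet) identity collapses the $p$-fold integral into a single $p\times p$ determinant whose $(i,j)$ entry is $\gamma(q-p+i+j-1,\,x)$. Differentiating in $x$ then produces a corresponding determinantal expression for $f_{\max}(x)$.

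To extract the stated double-sum form, I would expand the determinant via the Leibniz formula and substitute the finite-series identity $\gamma(m,x)=(m-1)!\bigl[1-e^{-x}\sum_{k=0}^{m-1} x^{k}/k!\bigr]$. Within each permutation, choosing a subset of $r$ factors to contribute their exponential part yields a contribution of the form (constant)$\,\times e^{-rx}\times$ (polynomial in $x$), while the remaining factors contribute plain constants. Summing over permutations and subsets and regrouping by the exponent $r$ produces $f_{\max}(x)=\sum_{r=1}^{p} e^{-rx}\sum_{s} a_{r,s} x^{s}$; the substitution $u=rt$ in $\int_{0}^{x} t^{s} e^{-rt}\,dt = \gamma(s+1,rx)/r^{s+1}$ then delivers the CDF in the claimed form.

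The principal difficulty is not the existence of this representation but pinning down the exact index range. The lower bound $s\geq q-p$ is immediate from the factor $\lambda^{q-p}$ carried through the expansion, but the upper bound $s\leq (p+q-2r)r$ is more delicate: a naive degree count based on a single permutation gives the larger value $r(p+q-r-1)$, and one has to show that leading terms beyond degree $(p+q-2r)r$ cancel when opposite-parity permutations are combined (a small sanity check with $p=q=2$ already exhibits this cancellation, reducing the $r=2$ polynomial from degree $2q-2$ down to degree $0$). Rather than reproducing the full combinatorial argument, I would invoke the Maaref--Aissa expansion cited in the lemma, whose recursive numerical scheme produces the coefficients $a_{r,s}$ on the correct support; closed forms for the $a_{r,s}$ are not required for the subsequent asymptotic sum-rate analysis.
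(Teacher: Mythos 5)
The paper does not prove this lemma at all: it is quoted as a known closed-form result, with the density attributed to Mallik (the paper's \cite{Mallik03}) and the coefficients $a_{r,s}$ to the numerical scheme of Maaref--Aissa (\cite{Maaref05}). Your sketch reconstructs the standard derivation that underlies those references, and it is sound: integrating the unordered joint eigenvalue density $\propto \prod_i \lambda_i^{q-p}e^{-\lambda_i}\prod_{i<j}(\lambda_i-\lambda_j)^2$ over $[0,x]^p$ and applying the Andreief identity gives $F_{\max}(x)$ as (a constant times) $\det[\gamma(q-p+i+j-1,x)]_{i,j=1}^p$, and expanding each entry via the finite series for $\gamma(m,x)$ and regrouping by the number of exponential factors yields exactly the $\sum_r e^{-rx}\sum_s a_{r,s}x^s$ structure, with the CDF following by termwise integration. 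The one non-elementary point is the one you correctly isolate: the naive degree count on a single Leibniz term gives $s\le r(p+q-r-1)$, and reducing this to the stated $s\le r(p+q-2r)$ requires showing that the top $r(r-1)$ degrees cancel across permutations (your $p=q=2$, $r=2$ check, where the $e^{-2x}$ coefficient collapses to a constant, is the right sanity test). Deferring the exact coefficients and their support to the cited recursion is consistent with how the lemma itself is phrased, so nothing essential is missing; only minor cosmetic quibbles remain (e.g., differentiating the determinant gives a sum of determinants rather than a single one, which does not affect the argument).
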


For $n\geq 2$, evaluating the distribution of $\gamma_{k}(n)$, $k
\in \mathcal{U}_n$, is significantly more challenging. Particularly,
the ``max'' operation (\ref{eq:selection1}) of Step 1 of the
previous iteration (i.e., the $\left(n-1\right)$-th), and also the
semi-orthogonality constraint imposed at Step 2 of the current
iteration (i.e., the $n$-th) will make the exact distribution of the
eigen-channel vectors in $\mathcal{U}_n$ different from the
distributions of the eigen-channel vectors in $\mathcal{U}_l$, $ l
\leq n-1$. More specifically, for $n\geq 2$, the eigen-channels for
users in the candidate set $\mathcal{U}_n$ are no longer distributed
according to the maximum eigen-channel of a complex Wishart matrix
(i.e., for $k \in \mathcal{U}_n$, $\mathbf{v}_k$  is no longer an
isotropically distributed unit vector on the complex unit sphere,
and $\lambda_{k,\text{max}}$ is no longer distributed as the maximum
eigenvalue of a complex Wishart matrix). 

We see from (\ref{eq:gamma_k}) that $\gamma_{k}(n)$ involves the
\emph{product} of $\lambda_{k,\text{max}}$ and the projection
variable $\parallel\boldsymbol{\xi}_{k}\parallel^2$. For the reasons
stated above, the exact distributions of both
$\lambda_{k,\text{max}}$ and
$\parallel\boldsymbol{\xi}_{k}\parallel^2$ for $k\in \mathcal{U}_n,
n\geq 2$ are currently unknown and appear very difficult to derive
analytically. Fortunately, we can make progress by appealing to the
``large-user'' regime.
In particular, when the number of users in the candidate set
$\mathcal{U}_n$ is large, the problem is greatly simplified by
invoking the following key lemma, which shows that removing a finite
number of users from $\mathcal{U}_n$ has negligible impact on the
statistical properties of the remaining users. Similar results have also
been established previously for different system configurations \cite{Yoo06,Dimic05,Love08}.

%


\begin{lemma}\label{lemma:iid_property}
At the $n$-th iteration, $ 2\leq n \leq M$, conditioned on the
previously selected eigen-channel vectors $\mathbf{c}_{\pi(1)},
\cdots, \mathbf{c}_{\pi(n-1)}$, the eigen-channel vectors in
$\mathcal{U}_n$ are i.i.d. Furthermore, as the size of the candidate
user set $\mathcal{U}_n$ grows large (i.e. $\lim_{K \rightarrow
\infty} |\mathcal{U}_n| = \infty$), conditioned on the previously
selected eigen-channels $\mathbf{c}_{\pi(1)}, \cdots,
\mathbf{c}_{\pi(n-1)}$, the eigen-channel for each user in
$\mathcal{U}_n$ converges in distribution to the distribution of the
principal eigen-channel of a complex Wishart matrix.
\end{lemma}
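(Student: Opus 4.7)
The plan is to prove the two claims separately: the conditional i.i.d.\ statement by a straightforward symmetry/induction argument, and the convergence statement by carefully disentangling the two biases that the algorithm introduces, namely the argmax selection and the semi-orthogonality pruning.

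First I would establish conditional independence by induction on $n$. The base case $n=1$ is immediate, since $\{\mathbf{H}_k\}_{k=1}^{K}$ are i.i.d.\ by assumption. For the inductive step, $\mathcal{U}_n$ is obtained from $\mathcal{U}_{n-1}$ by (a) removing the selected pair $(\pi(n-1),d_{n-1})$ and (b) applying the per-user filter $|\mathbf{v}_{k,j}^H \mathbf{q}_{n-1}^H|^2<\delta$. Conditioning on $\mathbf{c}_{\pi(n-1)}$ fixes $\mathbf{q}_{n-1}$, after which both operations act on each surviving user's channel independently. Because the users were mutually independent to begin with and nothing couples the conditional distributions of distinct users given the selected channels, the surviving eigen-channels in $\mathcal{U}_n$ are i.i.d.\ conditionally on $\mathbf{c}_{\pi(1)},\ldots,\mathbf{c}_{\pi(n-1)}$.

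For the convergence claim, the observation I would exploit is that under i.i.d.\ Rayleigh fading the right singular vectors $\mathbf{V}_k$ are Haar-distributed on the unitary group and statistically independent of the singular values. Hence the semi-orthogonality filter, being a function only of $\mathbf{v}_{k,j}$, leaves the marginal law of $\lambda_{k,j}$ unchanged: it is still the Wishart maximum-eigenvalue density of Lemma~\ref{lemma:max_eigen_pdf}. The only additional conditioning comes from the ``not selected at any previous iteration'' event, which truncates each surviving $\gamma_{k,j}(l)$ below the winning value $\gamma_{\pi(l)}(l)$. As $K\to\infty$, the pass probability per round is $1-(1-\delta)^{M-1}>0$, so $|\mathcal{U}_l|=\Theta(K)$ and the winning value, being the extreme-value statistic of a diverging i.i.d.\ sample, tends to infinity in probability. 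Consequently, for any fixed user the ``below the current maximum'' constraint is satisfied with probability tending to one, the truncation becomes asymptotically vacuous, and the effective eigen-channel distribution converges to that of the unconstrained Wishart principal eigen-channel.

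The step I expect to be hardest is the rigorous handling of the nested max-conditioning across all $n-1$ previous iterations, because the truncation events are coupled through the random $\mathbf{q}_i$'s. I would address this by a dominated-convergence/Slutsky argument: at each level the truncation factor on a fixed user's density is $1-F_{\max}\bigl(\gamma_{\pi(l)}(l)\bigr)$, which tends to one in probability because the winning value drifts to infinity, and the limit can then be pushed through one iteration at a time. A secondary subtlety is that the geometric constraint $|\mathbf{v}_{k,j}^H \mathbf{q}_l^H|^2<\delta$ does not literally vanish for fixed $\delta$; I would reconcile this by noting that the subsequent analysis only needs the scalar statistic $\gamma_{k,j}(n)$ and projections onto directions orthogonal to $\mathrm{span}\{\mathbf{q}_1,\ldots,\mathbf{q}_{n-1}\}$, both of which are invariant under the direction truncation in the Haar sense, so the limiting Wishart description suffices for all downstream purposes.
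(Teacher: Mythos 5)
Your overall strategy coincides with the paper's: prove conditional independence by induction, and obtain the limiting distribution by observing that the only conditioning induced by the selection step is the truncation $\gamma_{k}(l)\leq\gamma_{\pi(l)}$, which becomes vacuous because the winning value is an extreme statistic of a sample whose size $|\mathcal{U}_l|$ diverges (the paper formalizes this by adapting the conditional-c.d.f.\ induction of \cite{Love08} and \cite{Wang_report07}, replacing the norm-ordering condition by $\gamma_{k}(n)\leq\gamma_{\pi(n)}$ and letting $\gamma_{\pi(n)}\to\infty$). Your additional observation that the Haar singular vectors are independent of the singular values, so the semi-orthogonality filter cannot perturb the eigenvalue marginal, is a clean way to see part of the conclusion and is consistent with how the paper later exploits this independence in \emph{Lemma}~\ref{lemma:distri_DPC}.

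The one step I would push back on is your closing reconciliation of the non-vanishing geometric constraint. It is not true that $\gamma_{k,j}(n)$ or the projection onto the orthogonal complement of $\mathrm{span}\{\mathbf{q}_1,\ldots,\mathbf{q}_{n-1}\}$ is ``invariant under the direction truncation in the Haar sense'': conditioning on $|\mathbf{v}_{k}^H\mathbf{q}_l^H|^2<\delta$ for $l<n$ materially changes the law of $\|\boldsymbol{\xi}_{k}\|^2=1-\sum_{l<n}|\mathbf{v}_{k}^H\mathbf{q}_l^H|^2$ (it forces $\|\boldsymbol{\xi}_{k}\|^2>1-(n-1)\delta$, and the resulting nontrivial conditional c.d.f.\ is exactly what \emph{Lemmas}~\ref{lemma:distri_mu_delta}--\ref{lemma:distri_beta_bound} compute), and hence changes the law of $\gamma_{k,j}(n)=\lambda_{k,j}\|\boldsymbol{\xi}_{k,j}\|^2$. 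The correct reading, which is also how the paper uses the lemma, is that only the argmax-induced conditioning washes out in the limit; the candidate channels then behave as i.i.d.\ Wishart principal eigen-channels \emph{from which} the explicit semi-orthogonality conditioning is subsequently applied and quantified. If you drop the invariance claim and simply state that the residual semi-orthogonality conditioning is retained and handled downstream, your argument is complete and matches the paper's.
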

\begin{proof}
See Appendix \ref{proof_lemma:iid_property}.
\end{proof}


Note that our result here differs from that of \cite{Love08} in both
the distribution of the channel vectors and also the user selection
algorithm.

Equipped with \emph{Lemma} \ref{lemma:iid_property}, at the $n$-th
iteration, from the point of view of the users in $\mathcal{U}_n$,
the eigen-channel vectors of the selected users in the previous
iterations (i.e., $\mathbf{c}_{\pi(1)}, \cdots,
\mathbf{c}_{\pi(n-1)}$) appear to be \emph{randomly} selected. Thus,
the orthonormal basis $\mathbf{q}_1, \cdots, \mathbf{q}_{n-1}$
(generated from $\mathbf{c}_{\pi(1)}, \cdots,
\mathbf{c}_{\pi(n-1)}$) appears independent of the eigen-channel
vectors of the users in $\mathcal{U}_n$.
This greatly simplifies the following analysis.

We require the exact distribution of
$\gamma_{k}(n) = \lambda_{k,\text{max}}
\parallel\boldsymbol{\xi}_{k}\parallel^2$. To this end, the
major challenge is to derive the c.d.f.\ of $\beta_{k}(n)= \|
\boldsymbol{\xi}_{k} \|^2 $ for an arbitrary user $k\in
\mathcal{U}_n$, i.e. $F_{\beta (n)}(x) = \text{Pr}\big(\beta_{k} (n)
\leq x |~k \in \mathcal{U}_n \big)$. Recalling that $l_{n,j} =
\mathbf{v}_{\pi(n), d_n}^H \mathbf{q}_j^H$ for $ j<n$, with
(\ref{eq:beta_n}) and (\ref{eq:sum_l}), we can re-express this
c.d.f.\ as follows:
\begin{eqnarray}\label{eq:beta_dist}
F_{\beta (n)}(x) &=& \text{Pr}\left( |\mathbf{v}_k^H \mathbf{q}_n^H
|^2 \leq x ~\big|~|\mathbf{v}_k^H \mathbf{q}_1^H |^2< \delta,
\cdots,|\mathbf{v}_k^H \mathbf{q}_{n-1}^H |^2< \delta\right) \nonumber\\
&=&\text{Pr}\left( \sum_{i=1}^{n-1} |\mathbf{v}_k^H \mathbf{q}_i^H
|^2 \geq 1- x ~\bigg|~|\mathbf{v}_k^H \mathbf{q}_1^H |^2<
\delta,\cdots,
|\mathbf{v}_k^H \mathbf{q}_{n-1}^H |^2< \delta\right)\nonumber\\
& = & 1- \frac{\text{Pr}\left( \sum_{i=1}^{n-1} |\mathbf{v}_k^H
\mathbf{q}_i^H |^2 \leq 1- x , |\mathbf{v}_k^H \mathbf{q}_1^H |^2<
\delta,\cdots, |\mathbf{v}_k^H \mathbf{q}_{n-1}^H |^2 <
\delta\right)}{\text{Pr}\left( |\mathbf{v}_k^H \mathbf{q}_1^H |^2<
\delta,\cdots, |\mathbf{v}_k^H \mathbf{q}_{n-1}^H |^2< \delta
\right)}.
\end{eqnarray}

The denominator, $\mu_{n}(\delta) \defeq \text{Pr}\big(
|\mathbf{v}_k^H \mathbf{q}_1^H |^2 < \delta,\cdots, |\mathbf{v}_k^H
\mathbf{q}_{n-1}^H |^2 < \delta\big)$, 
denotes the probability that any arbitrary user $k \in \{ 1, \ldots,
K \}$ will belong to the set $\mathcal{U}_n$. Note that this
probability has also been considered in the context of ZFBF for the
MIMO broadcast channel in \cite{Yoo06}, where a rather loose lower
bound was derived. Here we derive an exact expression which applies
for large $K$, using an alternative derivation approach. For
tractability, our result applies for $\delta < \frac{1}{M-1}$, which
is easy to establish.
\begin{lemma}\label{lemma:distri_mu_delta}
With sufficiently large $K$ and $\delta < \frac{1}{M-1}$, the
probability that an arbitrary user $k \in \{ 1, \ldots, K \}$
belongs to the set $\mathcal{U}_{n}$, for $n \in \{ 2, \cdots, M
\}$, is given by
\begin{align}\label{eq:mu_delta}
\mu_{n}(\delta) &=  \text{Pr}\left( |\mathbf{v}_k^H \mathbf{q}_1^H
|^2<
\delta,\cdots, |\mathbf{v}_k^H \mathbf{q}_{n-1}^H |^2< \delta\right) \nonumber\\
&= \sum_{k=n-1}^{M-1} \binom{M-1}{k} (-1)^{k} \bigg[
\sum_{i=0}^{n-1} \binom{n-1}{i} (-1)^{i} i^k \bigg] \delta^k.
\end{align}
\end{lemma}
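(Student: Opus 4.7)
The plan is to reduce the probability to an easily evaluated integral against a uniform Dirichlet density, and then present that integral in the stated polynomial form via inclusion--exclusion and the binomial theorem.

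First, I would invoke \emph{Lemma} \ref{lemma:iid_property}: in the large-$K$ regime, the principal eigenvector $\mathbf{v}_k$ of an arbitrary user is distributed as the principal eigenvector of an i.i.d.\ complex Wishart matrix (hence isotropically uniform on the complex unit sphere in $\mathbb{C}^M$), and is independent of the orthonormal vectors $\mathbf{q}_1, \ldots, \mathbf{q}_{n-1}$ constructed from the previously selected users. Completing $\{\mathbf{q}_i\}_{i=1}^{n-1}$ to an orthonormal basis of $\mathbb{C}^M$ and setting $Z_i = |\mathbf{v}_k^H \mathbf{q}_i^H|^2$, the vector $(Z_1, \ldots, Z_M)$ is uniform on the probability simplex, so that the marginal joint density of any subset of $i$ of the $Z_j$'s is
\[
f(z_1,\ldots,z_i) = \frac{(M-1)!}{(M-i-1)!}\,(1 - z_1 - \cdots - z_i)^{M-i-1}
\]
on the positive simplex. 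This reduces $\mu_n(\delta)$ to a concrete multivariate probability.

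Next I would apply inclusion--exclusion on the complementary events $\{Z_i \geq \delta\}$:
\[
\mu_n(\delta) = \sum_{i=0}^{n-1} \binom{n-1}{i}(-1)^i \,\Pr(Z_1 \geq \delta, \ldots, Z_i \geq \delta),
\]
where the symmetry of the joint distribution under permutation of indices allows one to represent all size-$i$ tail probabilities by a single canonical one. Each such tail probability is evaluated by the shift $u_j = z_j - \delta$ followed by the rescaling $v_j = u_j/(1 - i\delta)$, which turns the integral into the normalizing integral of the same Dirichlet density over the standard simplex, yielding the clean expression $\Pr(Z_1 \geq \delta, \ldots, Z_i \geq \delta) = (1 - i\delta)^{M-1}$. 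The hypothesis $\delta < 1/(M-1)$ ensures $(n-1)\delta < 1$ for every $n \leq M$, so this change of variables is valid for every term in the inclusion--exclusion sum.

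Finally, I would substitute this closed form, expand $(1-i\delta)^{M-1}$ by the binomial theorem, and interchange the sums to obtain
\[
\mu_n(\delta) = \sum_{k=0}^{M-1} \binom{M-1}{k}(-1)^k \delta^k \sum_{i=0}^{n-1}\binom{n-1}{i}(-1)^i i^k,
\]
and then note that the inner sum is the $(n-1)$-st forward difference of the polynomial $x^k$ evaluated at $x = 0$, which vanishes whenever $k < n-1$. Truncating the outer sum accordingly produces exactly the expression in (\ref{eq:mu_delta}).

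The step I anticipate requiring the most care is the first one: translating the ``large $K$'' convergence in \emph{Lemma} \ref{lemma:iid_property} into an exact isotropic distribution for $\mathbf{v}_k$ with the $\mathbf{q}_i$'s acting as a fixed (and independent) orthonormal frame. Once that independence/isotropy is in hand, the remaining steps are a mechanical combination of Dirichlet marginals, a standard change of variables, and a binomial identity, so the combinatorial structure of the final sum should follow without further difficulty.
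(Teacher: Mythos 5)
Your proof is correct, and it reaches the paper's intermediate identity $\mu_n(\delta)=\sum_{i=0}^{n-1}\binom{n-1}{i}(-1)^i(1-i\delta)^{M-1}$ by a genuinely different route. The setup is shared: both you and the paper invoke \emph{Lemma} \ref{lemma:iid_property} to treat $\mathbf{v}_k$ as isotropic and independent of the $\mathbf{q}_i$'s, and both reduce the problem to the flat Dirichlet law of $\left(|v_1|^2,\cdots,|v_{n-1}|^2\right)$ (the paper derives this density from hypersphere surface areas; you simply quote the standard Dirichlet marginal, which is legitimate). The divergence is in evaluating $\Pr\left(t_1<\delta,\cdots,t_{n-1}<\delta\right)$: the paper integrates the box $[0,\delta]^{n-1}$ directly, setting up the recursion $\varphi_n(z)=\frac{1}{M-n}\left(\varphi_{n-1}(z)-\varphi_{n-1}(z-\delta)\right)$ and proving the iterated finite-difference formula (\ref{eq:induction_assumption1}) by induction before specializing to $\varphi_1$; you instead apply inclusion--exclusion over the complementary events $\{Z_j\geq\delta\}$ together with the exact simplex-scaling identity $\Pr\left(Z_1\geq\delta,\cdots,Z_i\geq\delta\right)=(1-i\delta)^{M-1}$, which you justify by the shift-and-rescale change of variables (valid precisely because $\delta<\tfrac{1}{M-1}$ keeps every $1-i\delta$ positive). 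Your argument is shorter and avoids the induction entirely, at the cost of relying on exchangeability and the geometric volume identity rather than an explicit iterated integration; the final step (binomial expansion of $(1-i\delta)^{M-1}$, interchange of sums, and truncation via the vanishing of the $(n-1)$-st finite difference of $x^k$ for $k<n-1$, i.e.\ identity (\ref{eq:iden1})) coincides with the paper's. Both proofs inherit the same caveat from the first step, which you correctly flag: the conclusion holds only in the asymptotic sense in which \emph{Lemma} \ref{lemma:iid_property} licenses treating $\mathbf{v}_k$ as exactly isotropic and independent of the frame $\mathbf{q}_1,\cdots,\mathbf{q}_{n-1}$, which is why the lemma is stated for ``sufficiently large $K$.''
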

\begin{proof}
See Appendix \ref{proof_lemma:distri_mu_delta}.
\end{proof}
Note that the term ``sufficiently large'' in \emph{Lemma
\ref{lemma:distri_mu_delta}} implies that $K$ should be large enough
such that:
\begin{eqnarray}\label{eq:card_n}
\mathcal{K}_n = |\mathcal{U}_n| \approx K \mu_{n}(\delta)
\end{eqnarray}
due to the law of large numbers (LLN). In fact, this also places an
additional requirement on $\delta$, which must be selected such that
as $K \rightarrow  \infty$, $|\mathcal{U}_n|$ becomes sufficiently
large (e.g.\ such that $\lim_{K \rightarrow \infty} |\mathcal{U}_n|
= \infty$). More specifically, since $\delta < 1$, by examining
(\ref{eq:card_n}) and (\ref{eq:mu_delta}) and recalling the
condition on $\delta$ in the lemma statement, we can establish the
following design criterion: $\delta$ should be chosen such that
\begin{align} \label{eq:Conditions}
\lim_{K \to \infty} K \delta^{M-1} = \infty \; \; {\rm and} \; \;
\delta < \frac{1}{M-1} \; .
\end{align}
This implies that any $\delta$ can be selected, as long as it does
not approach zero at a rate of $1/K^{\frac{1}{M-1}}$ or faster as $K
\to \infty$, whilst also meeting the technical condition $\delta <
\frac{1}{M-1}$.  These are very mild conditions which are easy to
satisfy (for example, choosing $\delta$ to be any constant less than
$\frac{1}{M-1}$). We further discuss the design implications of
selecting $\delta$ in Section \ref{sec:Comparison}.


The numerator in (\ref{eq:beta_dist}) can
be evaluated using similar methods, which leads to the following result:
\begin{lemma} \label{lemma:distri_beta}
Let $ k \in \mathcal{U}_{n}$, $n \in \{ 2, \cdots, M \}$, and assume
$\delta$ is chosen to satisfy (\ref{eq:Conditions}). For
sufficiently large $K$, the c.d.f.\ of $\beta_k(n)$, given in
(\ref{eq:beta_dist}), can be expressed as follows:
\begin{eqnarray}\label{eq:distri_beta_n}
F_{\beta (n)}(x) &=&  \left\{
\begin{array}{lll}
0 \, ,  & x \leq 1-(n-1)\delta\\
1 - \frac{\Gamma(M)}{\Gamma(M-n+1)  \mu_{n}(\delta)}\\
\hspace{0.1cm}\times \int_{t_{n-1}}\cdots \int_{t_1}
(1-\sum_{i=1}^{n-1}t_i)^{M-n} {\rm d}t_1\cdots {\rm d}t_{n-1} \, ,
&   1- (n-1)\delta < x \leq 1\\
1 \, , & x > 1
\end{array}
\right.
\end{eqnarray}
where the integral region is given by $t_{i} \in \bigg[0,
\min\big\{\delta, 1-x - \sum_{j=i+1}^{n-1} t_j \big\}\bigg]$.

For $n=2$, (\ref{eq:distri_beta_n}) has the closed-form solution
\begin{align}\label{eq:distri_beta_2}
F_{\beta (2)}(x) = \left\{
\begin{array}{lll}
0 ~~ &  x \leq  1- \delta \\
\frac{x^{M-1} - (1-\delta)^{M-1}}{1- (1-\delta)^{M-1}} ~~ &   1- \delta < x \leq 1\\
1 ~~ & x > 1
\end{array} .
\right.
\end{align}
\end{lemma}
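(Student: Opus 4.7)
My plan is to reduce the problem to a question about the uniform distribution on the probability simplex, and then compute the required conditional probability as a multidimensional integral with truncated coordinates.

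The starting point is Lemma~\ref{lemma:iid_property}: for sufficiently large $K$, conditioned on $\mathbf{q}_1,\dots,\mathbf{q}_{n-1}$ (which are deterministic functions of the previously selected eigen-channels), the eigen-channel direction $\mathbf{v}_k$ of any user $k\in\mathcal{U}_n$ behaves as an isotropically distributed unit vector on the complex unit sphere of $\mathbb{C}^M$, independent of $\mathbf{q}_1,\dots,\mathbf{q}_{n-1}$. I would extend the orthonormal system $\mathbf{q}_1,\dots,\mathbf{q}_{n-1}$ to a complete orthonormal basis $\mathbf{q}_1,\dots,\mathbf{q}_M$ of $\mathbb{C}^M$ and set $t_i = |\mathbf{v}_k^H\mathbf{q}_i^H|^2$ for $i=1,\dots,M$. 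By the rotational invariance of $\mathbf{v}_k$, the vector $(t_1,\dots,t_M)$ is uniform on the simplex $\{t_i\ge 0,\ \sum_{i=1}^M t_i=1\}$, so by a well-known calculation the marginal joint density of $(t_1,\dots,t_{n-1})$ is
\begin{equation*}
f(t_1,\dots,t_{n-1}) \;=\; \frac{\Gamma(M)}{\Gamma(M-n+1)}\, \bigl(1-\textstyle\sum_{i=1}^{n-1}t_i\bigr)^{M-n},
\end{equation*}
supported on $\{t_i\ge 0,\ \sum t_i\le 1\}$. Combining \eqref{eq:beta_n} and \eqref{eq:sum_l} gives $\beta_k(n) = 1-\sum_{i=1}^{n-1}t_i$, and the conditioning event $\{k\in\mathcal{U}_n\}$ is exactly $\{t_1<\delta,\dots,t_{n-1}<\delta\}$.

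With this framework, the conditional c.d.f.\ in \eqref{eq:beta_dist} becomes
\begin{equation*}
F_{\beta(n)}(x) \;=\; 1 \;-\; \frac{1}{\mu_n(\delta)} \cdot \frac{\Gamma(M)}{\Gamma(M-n+1)} \int_{R(x)} \bigl(1-\textstyle\sum_{i=1}^{n-1}t_i\bigr)^{M-n} \,{\rm d}t_1\cdots {\rm d}t_{n-1},
\end{equation*}
where $R(x) = \{0\le t_i<\delta,\ \sum t_i\le 1-x\}$. The next step is to write $R(x)$ as an iterated-integral region. Integrating the variables in the order $t_1,t_2,\dots,t_{n-1}$, each successive bound is $t_i\in[0,\min\{\delta,\,1-x-\sum_{j=i+1}^{n-1}t_j\}]$, which matches the statement. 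I would then handle the three cases: (i) if $x\le 1-(n-1)\delta$, the constraint $\sum t_i\le 1-x$ is already implied by $t_i<\delta$, so the integral equals $\mu_n(\delta)/\Gamma(M)\cdot\Gamma(M-n+1)$ by Lemma~\ref{lemma:distri_mu_delta}, yielding $F_{\beta(n)}(x)=0$; (ii) if $1-(n-1)\delta<x\le 1$, the active truncation produces the stated formula; (iii) if $x>1$, then $\beta_k(n)\le 1$ automatically gives $F_{\beta(n)}(x)=1$.

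For the closed-form $n=2$ case I would just specialize: the marginal density of $t_1$ is $(M-1)(1-t_1)^{M-2}$, the denominator is $\mu_2(\delta)=1-(1-\delta)^{M-1}$, and for $1-\delta<x\le 1$ the numerator integrates to $\int_0^{1-x}(M-1)(1-t_1)^{M-2}{\rm d}t_1 = 1-x^{M-1}$. Subtracting from $1$ and combining the fractions produces $(x^{M-1}-(1-\delta)^{M-1})/(1-(1-\delta)^{M-1})$ on that interval, matching \eqref{eq:distri_beta_2}. The main obstacle is really bookkeeping: justifying the ``large-$K$'' step (which is licensed by Lemma~\ref{lemma:iid_property} together with the design condition \eqref{eq:Conditions} that keeps $|\mathcal{U}_n|\to\infty$), and being careful with the iterated integration bounds so that the truncation at $\delta$ and at $1-x-\sum_{j>i}t_j$ are both respected in every region of parameter space.
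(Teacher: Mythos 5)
Your proposal is correct and follows essentially the same route as the paper: reduce to the isotropic unit vector via Lemma~\ref{lemma:iid_property}, take $\mathbf{q}_i$ as coordinate directions, use the joint density $f(t_1,\dots,t_{n-1}) = \frac{\Gamma(M)}{\Gamma(M-n+1)}(1-\sum_i t_i)^{M-n}$ (which the paper derives in Appendix~\ref{proof_lemma:distri_mu_delta} via hypersphere surface areas, and which you obtain equivalently from uniformity of $(|v_1|^2,\dots,|v_M|^2)$ on the simplex), and then evaluate the ratio in (\ref{eq:beta_dist}) over the truncated region, specializing to $n=2$ for the closed form. The only cosmetic difference is your appeal to the Dirichlet/simplex fact in place of the paper's explicit surface-area computation, and your slightly more careful enumeration of the three cases.
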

\begin{proof}
See Appendix \ref{proof_lemma_distri_beta}.
\end{proof}
For arbitrary $M$ and $n$, it is difficult to obtain an exact
closed-form solution for this c.d.f.\  Based on the above lemma,
however, we can derive closed-form \emph{upper and lower bounds}, as
given by the following:
\begin{lemma}\label{lemma:distri_beta_bound}
The c.d.f.\ $F_{\beta (n)}(x)$, for $n \in \{ 2, \cdots, M \}$,
satisfies $F_{\bar{\beta}(n)} (x) \leq F_{\beta (n)}(x) \leq
F_{\tilde{\beta} (n)}(x)$, with $F_{\tilde{\beta} (n)}(x)$ and
$F_{\bar{\beta}(n)} (x)$ given by (\ref{eq:dom_distri_beta_n}) and
(\ref{eq:domed_distri_beta_n})
\begin{eqnarray}\label{eq:dom_distri_beta_n}
F_{\tilde{\beta} (n)}(x) &=&  \left\{
\begin{array}{lll}
0 ~~ &   x  \leq  1- (n-1)\delta\\
1- \frac{\mu_{n}(\frac{1-x}{n-1})}{\mu_{n}(\delta)}  ~~ &  1-
(n-1)\delta  < x \leq 1 \\
1 ~~ &  x>1
\end{array}
\right.
\end{eqnarray}
and
\begin{eqnarray}
\label{eq:domed_distri_beta_n} F_{\bar{\beta} (n)}(x) &=& \left\{
\begin{array}{lll}
0 ~~ &   x  \leq  1- (n-1)\delta\\
1 - \frac{I_{1-x} (n-1, M-n+1)}{\mu_{n} (\delta)} ~~ &  1-
(n-1)\delta  < x \leq 1 \\
1 ~~ &  x>1
\end{array}
\right.
\end{eqnarray}
respectively, where $\mu_{n}(\cdot)$ is given by (\ref{eq:mu_delta})
and $I_{x}(\cdot, \cdot)$ is the regularized incomplete beta
function.

Note that for $n=2$, $F_{\beta_k(n)} (x) = F_{\bar{\beta}(n)} (x) = F_{\tilde{\beta}(n)} (x)$.
\end{lemma}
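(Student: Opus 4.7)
The plan is to start from the integral representation of $F_{\beta (n)}(x)$ in Lemma~\ref{lemma:distri_beta} and sandwich it by shrinking and enlarging the feasible region in opposite directions. Setting $t_i := |\mathbf{v}_k^H \mathbf{q}_i^H|^2$, the event $\{\beta_k(n) \leq x\}$ is equivalent to $\{\sum_{i=1}^{n-1} t_i \geq 1-x\}$, and the conditioning in (\ref{eq:beta_dist}) restricts each $t_i$ to $[0,\delta)$. Under the large-$K$ assumption, Lemma~\ref{lemma:iid_property} lets me treat $\mathbf{v}_k$ as an isotropic unit vector independent of $\mathbf{q}_1,\ldots,\mathbf{q}_{n-1}$, so the joint law of $(t_1,\ldots,t_{n-1})$ is the standard Dirichlet-type law on the simplex. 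The ranges $x \leq 1-(n-1)\delta$ and $x > 1$ are immediate: in the first the integration region in (\ref{eq:distri_beta_n}) is empty, and in the second the constraint is vacuous. Hence only $1-(n-1)\delta < x \leq 1$ needs work.

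For the upper bound $F_{\tilde{\beta}(n)}$, I would exploit the elementary implication that if every $t_i$ is strictly less than $(1-x)/(n-1)$, then $\sum_i t_i < 1-x$ automatically; moreover $(1-x)/(n-1) \leq \delta$ throughout the non-trivial range, so the semi-orthogonality constraints are automatically met as well. This yields the set inclusion
\begin{equation*}
\left\{t_i < \tfrac{1-x}{n-1},\ \forall i\right\} \;\subseteq\; \left\{\textstyle\sum_i t_i < 1-x\right\} \cap \left\{t_i < \delta,\ \forall i\right\}.
\end{equation*}
Taking probabilities, the left-hand side equals $\mu_n\!\left(\tfrac{1-x}{n-1}\right)$ by Lemma~\ref{lemma:distri_mu_delta}, since the replacement $\delta \to (1-x)/(n-1)$ remains below $1/(M-1)$ throughout the non-trivial range. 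Dividing by $\mu_n(\delta)$ and substituting into (\ref{eq:beta_dist}) gives $F_{\beta(n)}(x) \leq F_{\tilde{\beta}(n)}(x)$.

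For the lower bound $F_{\bar{\beta}(n)}$, I would simply discard the semi-orthogonality constraints, writing
\begin{equation*}
\Pr\!\left(\textstyle\sum_i t_i < 1-x,\ t_i < \delta\ \forall i\right) \;\leq\; \Pr\!\left(\textstyle\sum_i t_i < 1-x\right).
\end{equation*}
The partial sum $S := \sum_{i=1}^{n-1} t_i$ is the squared norm of the projection of an isotropically distributed unit vector in $\mathbb{C}^M$ onto a fixed $(n-1)$-dimensional subspace, so $S \sim \mathrm{Beta}(n-1, M-n+1)$. Consequently $\Pr(S < 1-x) = I_{1-x}(n-1, M-n+1)$, and feeding this into (\ref{eq:beta_dist}) delivers $F_{\beta(n)}(x) \geq F_{\bar{\beta}(n)}(x)$.

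The main obstacle is the upper-bound step: one must verify that the substitution $\delta \to (1-x)/(n-1)$ still meets the technical hypothesis of Lemma~\ref{lemma:distri_mu_delta}, and this is precisely why the bound is meaningful only on $x \geq 1-(n-1)\delta$. As a consistency check for $n=2$, both bounds collapse to $F_{\beta(2)}$: direct computation gives $\mu_2(\eta) = 1-(1-\eta)^{M-1}$ and $I_{1-x}(1, M-1) = 1-x^{M-1}$, and both $F_{\tilde{\beta}(2)}$ and $F_{\bar{\beta}(2)}$ simplify to the closed form in (\ref{eq:distri_beta_2}), confirming the final assertion of the lemma.
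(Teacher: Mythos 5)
Your proposal is correct and follows essentially the same route as the paper: the upper bound comes from the same hypercube inclusion $\{t_i<(1-x)/(n-1)\ \forall i\}\subseteq\{\sum_i t_i<1-x,\ t_i<\delta\ \forall i\}$ identified with $\mu_n\bigl(\frac{1-x}{n-1}\bigr)$, and the lower bound from dropping the $t_i<\delta$ constraints and integrating over the full simplex. The only cosmetic difference is that you cite the ${\rm Beta}(n-1,M-n+1)$ law of the projected squared norm directly, whereas the paper evaluates the simplex integral explicitly to arrive at the same $I_{1-x}(n-1,M-n+1)$.
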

\begin{proof}
See Appendix \ref{proof_lemma:distri_beta_bound}.
\end{proof}

Equipped with \emph{Lemma \ref{lemma:distri_beta_bound}}, and with
the help of \emph{Lemma} \ref{lemma:max_eigen_pdf}, we may now
derive upper and lower bounds on the c.d.f.\ of $\gamma_{k}(n)$. To
establish this result, recall that for an arbitrary user $k \in
\mathcal{U}_n$, $n \geq 2$, then $\gamma_k(n) = \lambda_{k,
\text{max}} \beta_k(n)$. Also, define $\bar{\gamma}_k{(n)} =
\lambda_{k,\text{max}} \bar{\beta}_k(n)$ and $\tilde{\gamma}_k{(n)}
= \lambda_{k,\text{max}} \tilde{\beta}_k(n)$, with c.d.f.s
$F_{\bar{\gamma}(n)}(x)$ and $F_{\tilde{\gamma}(n)}(x)$
respectively.
\begin{lemma}\label{lemma:distri_DPC}
The c.d.f.\ $F_{\gamma(n)}(x)$, for $n \in \{ 2, \cdots, M \}$,
satisfies $F_{\bar{\gamma}(n)}(x) \leq F_{\gamma(n)}(x) \leq
F_{\tilde{\gamma}(n)}(x)$, with $F_{\tilde{\gamma}(n)}(x)$ and
$F_{\bar{\gamma}(n)}(x)$ given by
\begin{eqnarray}\label{eq:upper_distri_DPC}
F_{\tilde{\gamma}(n)}(x) &=&  F_{\text{max}} \left(\frac{x}{t}\right) -
\frac{1}{\mu_{n}(\delta)} \sum_{k=n-1}^{M-1} \binom{M-1}{k}(-1)^k
\left[ \sum_{i=0}^{n-1} \binom{n-1}{i}(-1)^i \left(\frac{i}{n-1} \right)^k
\right] \sum_{r=1}^{p} \sum_{s=q-p}^{(N+M-2r)r} a_{r,s} \nonumber \\
&& \hspace*{1cm} \times \sum_{j=0}^{k} \binom{k}{j} r^{k-j-s-1}
(-x)^{k-j} \left[\Gamma\left(j-k+s+1, rx \right) -
\Gamma\left(j-k+s+1, \frac{rx}{t}\right) \right].
\end{eqnarray}
\begin{eqnarray}
\label{eq:lower_distri_DPC} F_{\bar{\gamma}(n)}(x) &=&
F_{\text{max}} \left( \frac{x}{t} \right)- \frac{1}{\mu_n(\delta)}
\sum_{k=0}^{M-n} \binom{M-1}{k} (-1)^k
\sum_{r=1}^{p}\sum_{s=q-p}^{(N+M-2r)r} a_{r,s} \sum_{j=0}^{M-k-1} \binom{M-k-1}{j} r^{M-j-s-2} \nonumber\\
&& \hspace*{1cm} \times (-x)^{M-j-1} \left[ \Gamma\left(j+s-M+2, r
x\right)- \Gamma \left(j+s-M+2, \frac{r x}{t} \right) \right].
\end{eqnarray}
respectively, where $F_{\text{max}}(\cdot)$, $p$, $q$ and $a_{r,s}$
are defined as in \emph{Lemma} \ref{lemma:max_eigen_pdf},
$t=1-(n-1)\delta$ and $\Gamma(\cdot, \cdot)$ denotes the upper
incomplete gamma function.

For the case $n=2$, $F_{\gamma(n)}(x) = F_{\tilde{\gamma}(n)}(x)
=F_{\bar{\gamma}(n)}(x)$.
\end{lemma}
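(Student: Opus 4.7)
My plan is to express $F_{\gamma(n)}(x)$ as an integral against the density of $\lambda_{k,\max}$, substitute the bounds on $F_{\beta(n)}$ from Lemma~5 by monotonicity, and reduce the resulting integrals to upper incomplete gamma functions.

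The setup step rests on independence of $\lambda_{k,\max}$ and $\beta_k(n)$. For a complex Wishart matrix, eigenvalues and eigenvectors are statistically independent, so $\lambda_{k,\max}$ is independent of the principal right singular vector $\mathbf{v}_{k,1}$. By Lemma~2, in the large-$K$ regime the orthonormal basis $\{\mathbf{q}_1,\dots,\mathbf{q}_{n-1}\}$ can be treated as independent of user~$k$'s channel, and since $\beta_k(n)=\|\boldsymbol{\xi}_k\|^2$ depends only on $\mathbf{v}_{k,1}$ and this basis, it is independent of $\lambda_{k,\max}$. Noting from Lemma~5 that $\beta_k(n)$ is supported on $[t,1]$ with $t=1-(n-1)\delta$, conditioning on $\lambda_{k,\max}=\lambda$ makes $F_{\beta(n)}(x/\lambda)$ equal to $1$ for $\lambda\leq x$ and $0$ for $\lambda\geq x/t$, yielding
\begin{equation*}
F_{\gamma(n)}(x) \;=\; F_{\max}(x/t) \;-\; \int_{x}^{x/t}\bigl[1-F_{\beta(n)}(x/\lambda)\bigr]\, f_{\max}(\lambda)\, d\lambda.
\end{equation*}
Pointwise monotonicity of the integrand in $F_{\beta(n)}$ then delivers the two-sided bound $F_{\bar\gamma(n)}(x)\leq F_{\gamma(n)}(x)\leq F_{\tilde\gamma(n)}(x)$ once $F_{\beta(n)}$ is replaced by the corresponding bound from Lemma~5.

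For the closed-form evaluation I would use Lemma~4 to write $1-F_{\tilde\beta(n)}(x/\lambda)=\mu_n\bigl(\tfrac{1-x/\lambda}{n-1}\bigr)/\mu_n(\delta)$ as a polynomial in $1-x/\lambda$, and the finite-series representation $I_{z}(n-1,M-n+1)=\sum_{k=n-1}^{M-1}\binom{M-1}{k}z^k(1-z)^{M-1-k}$ of the regularized incomplete beta for the lower bound. Combined with $f_{\max}(\lambda)=\sum_{r,s}a_{r,s}\lambda^s e^{-r\lambda}$ from Lemma~3 and a binomial expansion of $(1-x/\lambda)^p$, each integrand becomes a finite linear combination of monomials $x^{\alpha}\lambda^{\beta}e^{-r\lambda}$; the substitution $u=r\lambda$ then gives $\int_{x}^{x/t}\lambda^\beta e^{-r\lambda}\,d\lambda=r^{-\beta-1}\bigl[\Gamma(\beta+1,rx)-\Gamma(\beta+1,rx/t)\bigr]$, matching the gamma-function structure in~\eqref{eq:upper_distri_DPC} and~\eqref{eq:lower_distri_DPC}.

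The main obstacle will be the combinatorial bookkeeping for the lower bound: the index ranges in~\eqref{eq:lower_distri_DPC} ($k\in\{0,\dots,M-n\}$, $j\in\{0,\dots,M-k-1\}$) do not match those produced naively by the incomplete-beta series. I plan to resolve this via the reindexing $k\mapsto M-1-k$ in the incomplete-beta expansion, which shifts the outer range to $\{0,\dots,M-n\}$ and leaves $(1-x/\lambda)^{M-1-k}$ to be expanded by the binomial theorem, producing $\binom{M-k-1}{j}$ in the inner sum with the correct overall sign $(-1)^k$ and the factor $(-x)^{M-j-1}$. The boundary case $n=2$ requires no extra work because Lemma~5 already gives $F_{\bar\beta(2)}=F_{\tilde\beta(2)}=F_{\beta(2)}$ exactly, and this equality propagates directly to $F_{\bar\gamma(2)}=F_{\tilde\gamma(2)}=F_{\gamma(2)}$.
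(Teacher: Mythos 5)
Your proposal is correct and follows essentially the same route as the paper's proof: independence of the Wishart eigenvalue and eigenvector gives the integral representation $F_{\gamma(n)}(x)=\int_0^\infty F_{\beta(n)}(x/y)f_{\max}(y)\,{\rm d}y$ (which, after truncating to the support $[t,1]$ of $\beta_k(n)$, is exactly your $F_{\max}(x/t)$ minus the complement integral over $[x,x/t]$), monotonicity in $F_{\beta(n)}$ transfers the bounds of Lemma 5, and the binomial expansion plus $\int_x^{x/t}\lambda^{\beta}e^{-r\lambda}{\rm d}\lambda=r^{-\beta-1}[\Gamma(\beta+1,rx)-\Gamma(\beta+1,rx/t)]$ yields the closed forms. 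Your reindexing of the finite incomplete-beta series for the lower bound checks out (the paper omits that computation), so nothing is missing.
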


\begin{proof}
See Appendix \ref{proof_distri_DPC}.
\end{proof}
Although not shown due to space limitations, these bounds have been confirmed through simulations.

Recall that our primary aim is to characterize the distribution
of $\zeta_{\pi(n)}$, or equivalently $\gamma_{\pi(n)}$ which, from
(\ref{eq:selection2}), is the maximum of a collection of i.i.d.\
random variables chosen from $\mathcal{U}_n$, with common c.d.f.\
$F_{\gamma(n)} (x)$. Moreover, as discussed previously, our main
interest is the case where the number of users $K$, and consequently
the size of $\mathcal{U}_n$, is large.  As such, from the theory of
extreme order statistics (see e.g.\ \cite[Appendix
I]{Mohammad08}\cite{oreder_statistics03}),
the asymptotic distribution of the largest order statistic
$\gamma_{\pi(n)}$ depends on the \emph{tail} behavior (large $x$) of
$F_{\gamma(n)} (x)$. For $n \geq 2$, the following closed-form
asymptotic (high $x$) expansions for the c.d.f.\ upper and lower
bounds in (\ref{eq:upper_distri_DPC}) and
(\ref{eq:lower_distri_DPC}) are derived in Appendix
\ref{app:tail_gamman}:
\begin{eqnarray}
\label{eq:upper_tail_distri} F_{\tilde{\gamma}(n)}(x) &=& 1-
\frac{1}{\mu_n(\delta)~ \varepsilon_n} e^{-x} x^{M+N-n-1}
\nonumber\\ && +
~O(e^{-x}x^{M+N-n-2}) \\
\label{eq:lower_tail_distri} F_{\bar{\gamma}(n)}(x) &=& 1-
\frac{1}{\mu_n(\delta)~\epsilon_n} e^{-x} x^{M+N-n-1} \nonumber\\
&& +~ O(e^{-x}x^{M+N-n-2})
\end{eqnarray}
where
\begin{eqnarray}\label{eq:parameter}
\frac{1}{\varepsilon_n} & =& \frac{\Gamma(n)}{
\Gamma(M-n+1)\Gamma(N)(n-1)^{n-1}},\\  \frac{1}{\epsilon_n} & =&
\frac{1}{ \Gamma(M-n+1)\Gamma(N)}.
\end{eqnarray}

Based on the above results, we can establish upper and lower bounds
of the asymptotic distribution of $\gamma_{\pi(n)}$, for large $K$.
To this end, define $\tilde{\gamma}_{\pi(n)} = \max_{k \in
\mathcal{U}_n} \tilde{\gamma}_{k}(n)$ and $\bar{\gamma}_{\pi(n)} =
\max_{k \in \mathcal{U}_n} \bar{\gamma}_{k}(n)$, with c.d.f.s
$F_{\tilde{\gamma}_{\pi(n)}}(x)$ and $F_{\bar{\gamma}_{\pi(n)}}(x)$
respectively. It is clear that $F_{\bar{\gamma}_{\pi(n)}}(x) \leq
F_{{\gamma}_{\pi(n)}}(x) \leq F_{\tilde{\gamma}_{\pi(n)}}(x)$, where
the equalities hold when $n = 1$. Then, we have the following lemma:
\begin{lemma}\label{lemma:gamma_bound_ZFDPC}
The random variables $\tilde{\gamma}_{\pi(n)}$ and
$\bar{\gamma}_{\pi(n)}$, $n \in \{ 2,\cdots, M \}$, satisfy
\begin{eqnarray}\label{eq:extre_gamma_til}
&&\text{Pr}\{ u_n - \log\log \sqrt{K} \leq \tilde{\gamma}_{\pi(n)}
\leq  u_n + \log\log \sqrt{K}\} \nonumber\\
&& \hspace{3cm}\geq 1 - O\bigg(\frac{1}{\log K}\bigg),\\
\label{eq:extre_gamma_bar} &&\text{Pr}\{\chi_n - \log\log \sqrt{K}
\leq \bar{\gamma}_{\pi(n)} \leq \chi_n + \log\log \sqrt{K}\}\nonumber\\
&& \hspace{3cm} \geq 1 - O \left (\frac{1}{\log K}\right),
\end{eqnarray}
where\footnote{Here $\log(\cdot)$ represents the natural logarithm.}
\begin{align}\label{eq:u_n}
u_n = \log \bigg(\frac{K}{\varepsilon_n}\bigg) + (M+N-n-1) \log\log
\bigg(\frac{K }{\varepsilon_n}\bigg), \end{align} \begin{align}
\label{eq:chi_n} \chi_n = \log \left(\frac{K}{\epsilon_n}\right) +
(M+N-n-1) \log\log\left(\frac{K}{\epsilon_n}\right).
\end{align}
\end{lemma}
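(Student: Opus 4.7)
The plan is to apply classical extreme value theory to the i.i.d.\ sequences $\{\tilde{\gamma}_k(n)\}_{k\in\mathcal{U}_n}$ and $\{\bar{\gamma}_k(n)\}_{k\in\mathcal{U}_n}$ (their conditional independence and identical distribution being guaranteed by \emph{Lemma} \ref{lemma:iid_property}), using the Gumbel-type tail expansions (\ref{eq:upper_tail_distri}) and (\ref{eq:lower_tail_distri}) as the main analytical input. Since the two cases are structurally identical, I will describe the argument only for $\tilde{\gamma}_{\pi(n)}$; the bound for $\bar{\gamma}_{\pi(n)}$ follows by replacing $\varepsilon_n$ by $\epsilon_n$ and $u_n$ by $\chi_n$.

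First, write $P\{\tilde{\gamma}_{\pi(n)} \leq x\} = \big[F_{\tilde{\gamma}(n)}(x)\big]^{\mathcal{K}_n}$, where by (\ref{eq:card_n}) and the law of large numbers $\mathcal{K}_n = |\mathcal{U}_n| = K\mu_n(\delta)(1+o(1))$ with probability approaching one. Using (\ref{eq:upper_tail_distri}),
\begin{equation*}
\big[F_{\tilde{\gamma}(n)}(x)\big]^{\mathcal{K}_n} = \exp\!\left(-\frac{K(1+o(1))}{\varepsilon_n}\,e^{-x} x^{M+N-n-1}\big(1+O(1/x)\big)\right)
\end{equation*}
for large $x$. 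I then substitute $x = u_n \pm \log\log\sqrt{K}$. Because $u_n = \log(K/\varepsilon_n) + (M+N-n-1)\log\log(K/\varepsilon_n)$, a direct computation (expanding $\log(u_n \pm \log\log\sqrt{K})$ to first order in $\log\log K/\log K$) gives
\begin{equation*}
\frac{K}{\varepsilon_n}\, e^{-u_n \mp \log\log\sqrt{K}} (u_n \pm \log\log\sqrt{K})^{M+N-n-1} = (\log\sqrt{K})^{\mp 1}\,(1+o(1)).
\end{equation*}
Therefore $P\{\tilde{\gamma}_{\pi(n)} \leq u_n + \log\log\sqrt{K}\} \to \exp(-2/\log K) = 1-O(1/\log K)$, while $P\{\tilde{\gamma}_{\pi(n)} \leq u_n - \log\log\sqrt{K}\} \to \exp(-\tfrac{1}{2}\log K) = K^{-1/2}$. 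Taking the difference yields (\ref{eq:extre_gamma_til}).

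The only delicate points are: (i) justifying that the remainder term $O(e^{-x}x^{M+N-n-2})$ in the tail expansion is negligible after exponentiation at the chosen scale $x = u_n \pm \log\log\sqrt{K}$, which amounts to checking that the ratio of this remainder to the leading term is $O(1/\log K)$; and (ii) absorbing the randomness of $\mathcal{K}_n$ into the $1-O(1/\log K)$ rate. The latter is handled by a standard concentration argument on $|\mathcal{U}_n|/K$ around $\mu_n(\delta)$, conditioning on the event $\{\mathcal{K}_n = K\mu_n(\delta)(1+o(1))\}$ whose complement has probability much smaller than $1/\log K$ for large $K$. These are the main technical hurdles; once cleared, the rest of the argument reduces to the elementary asymptotic analysis sketched above, and an identical computation with $\varepsilon_n \to \epsilon_n$, $u_n \to \chi_n$ delivers (\ref{eq:extre_gamma_bar}).
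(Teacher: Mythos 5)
Your proposal is correct and follows essentially the same route as the paper: the paper's proof simply combines the tail expansions (\ref{eq:upper_tail_distri}) and (\ref{eq:lower_tail_distri}) with a cited extreme-order-statistics lemma (\cite[\emph{Lemma} 7]{Mohammad08}), and your argument is precisely an inline derivation of what that cited lemma delivers --- raising the c.d.f.\ to the power $\mathcal{K}_n \approx K\mu_n(\delta)$ so that $\mu_n(\delta)$ cancels, and evaluating at $u_n \pm \log\log\sqrt{K}$ to get $\exp(-2/\log K)$ and $K^{-1/2}$ respectively. Your explicit treatment of the remainder term and of the fluctuation of $|\mathcal{U}_n|$ is sound and, if anything, more careful than the paper's one-line citation.
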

\begin{proof}
This result is readily established by combining
(\ref{eq:upper_tail_distri}) and (\ref{eq:lower_tail_distri}) with
the extreme order statistics result given in\footnote{Note that
there are some minor typographical errors with \cite[\emph{Lemma}
7]{Mohammad08}. Here we have adopted the correct results.}
\cite[\emph{Lemma} 7]{Mohammad08}.
\end{proof}
For the case $n = 1$, $\gamma_{\pi(n)}=
\tilde{\gamma}_{\pi(n)}=\bar{\gamma}_{\pi(n)}$, whose asymptotic
distribution is \cite{Mohammad08}
\begin{eqnarray}
&&\text{Pr}\{ u_1 - \log\log \sqrt{K} \leq \gamma_{\pi(1)} \leq u_1
+
\log\log \sqrt{K}\} \nonumber\\
&& \hspace{3cm}\geq 1 - O\bigg(\frac{1}{\log K}\bigg).
\end{eqnarray}
Interestingly, we can obtain the same result if we substitute $n=1$ into
(\ref{eq:extre_gamma_til})--(\ref{eq:chi_n}). The asymptotic
distribution of $\zeta_{\pi(n)}$ follows from the above results.
\begin{lemma}\label{lemma:SINR_bound_ZFDPC1}
Let $\rho = \frac{P}{M}$. For $\zeta_{\pi(n)}$, $n \in \{ 1,\cdots,M \}$, we
have
\begin{eqnarray}\label{eq:sinr_up_low2}
&&\hspace{-1cm}\text{Pr}\{ \varpi_n - \rho \log\log \sqrt K \leq
\zeta_{\pi(n)}
\leq \upsilon_n + \rho\log\log \sqrt K  \} \nonumber\\
&& \hspace{2cm}\geq 1 - O\bigg( \frac{1}{\log K} \bigg),
\end{eqnarray}
where
\begin{align}\label{eq:varpi_n}
\varpi_n= \rho \log \bigg(\frac{K}{\varepsilon_n}\bigg) + \rho
(M+N-n-1) \log\log \bigg(\frac{K }{\varepsilon_n}\bigg),
\end{align}
\begin{align}
\label{eq:upsilon_n} \upsilon_n = \rho \log
\left(\frac{K}{\epsilon_n}\right) + \rho (M+N-n-1)
\log\log\left(\frac{K}{\epsilon_n}\right).
\end{align}
\end{lemma}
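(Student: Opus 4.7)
The plan is to reduce the statement about the output SNR $\zeta_{\pi(n)}$ to the already-established statement about $\gamma_{\pi(n)}$ in Lemma \ref{lemma:gamma_bound_ZFDPC}. Under the equal-power assumption stated at the start of Section \ref{sec:analysis}, we have $p_i = P/M = \rho$ for every active stream, so (\ref{eq:SNR}) gives the simple identity $\zeta_{\pi(n)} = \rho\,\gamma_{\pi(n)}$. Consequently, any high-probability two-sided bound on $\gamma_{\pi(n)}$ rescales directly by $\rho$ to a bound on $\zeta_{\pi(n)}$; no new distributional analysis is required.

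Next I would translate the c.d.f.\ sandwich $F_{\bar{\gamma}_{\pi(n)}}(x) \leq F_{\gamma_{\pi(n)}}(x) \leq F_{\tilde{\gamma}_{\pi(n)}}(x)$, noted just before Lemma \ref{lemma:gamma_bound_ZFDPC}, into the corresponding stochastic ordering of quantiles: for every $x$, $\Pr(\gamma_{\pi(n)} \geq x) \geq \Pr(\tilde{\gamma}_{\pi(n)} \geq x)$ and $\Pr(\gamma_{\pi(n)} \leq x) \geq \Pr(\bar{\gamma}_{\pi(n)} \leq x)$. Then I would invoke the two halves of Lemma \ref{lemma:gamma_bound_ZFDPC} separately. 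The concentration of $\tilde{\gamma}_{\pi(n)}$ around $u_n$ in (\ref{eq:extre_gamma_til}) gives $\Pr(\gamma_{\pi(n)} \geq u_n - \log\log\sqrt{K}) \geq 1 - O(1/\log K)$, while the concentration of $\bar{\gamma}_{\pi(n)}$ around $\chi_n$ in (\ref{eq:extre_gamma_bar}) gives $\Pr(\gamma_{\pi(n)} \leq \chi_n + \log\log\sqrt{K}) \geq 1 - O(1/\log K)$. A single-line union bound combines these into the two-sided statement $\Pr(u_n - \log\log\sqrt{K} \leq \gamma_{\pi(n)} \leq \chi_n + \log\log\sqrt{K}) \geq 1 - O(1/\log K)$.

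Multiplying through by $\rho$ and identifying $\rho u_n = \varpi_n$ and $\rho\chi_n = \upsilon_n$ directly from (\ref{eq:u_n})--(\ref{eq:chi_n}) and (\ref{eq:varpi_n})--(\ref{eq:upsilon_n}) then delivers (\ref{eq:sinr_up_low2}). The case $n=1$ is treated separately but trivially: the sandwich collapses since $\varepsilon_1 = \epsilon_1 = 1/(\Gamma(M)\Gamma(N))$ and $\tilde{\gamma}_{\pi(1)} = \bar{\gamma}_{\pi(1)} = \gamma_{\pi(1)}$, so the extreme-value statement already quoted from \cite{Mohammad08} applies directly, and substituting $n=1$ into $\varpi_n,\upsilon_n$ reproduces the same bound on both sides.

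The only substantive care required is to track the direction of the stochastic ordering correctly: the \emph{upper} c.d.f.\ bound $F_{\tilde{\gamma}_{\pi(n)}}$ is what furnishes a \emph{lower} quantile bound for $\gamma_{\pi(n)}$, and the \emph{lower} c.d.f.\ bound $F_{\bar{\gamma}_{\pi(n)}}$ furnishes the \emph{upper} quantile bound. Beyond this bookkeeping there is no real obstacle; with Lemma \ref{lemma:gamma_bound_ZFDPC} and the equal-power identity in hand, this lemma is essentially an immediate corollary.
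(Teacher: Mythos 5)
Your proposal is correct and follows essentially the same route as the paper's own proof: both use the stochastic-ordering consequence of the c.d.f.\ sandwich $F_{\bar{\gamma}_{\pi(n)}}\leq F_{\gamma_{\pi(n)}}\leq F_{\tilde{\gamma}_{\pi(n)}}$ together with the two halves of Lemma \ref{lemma:gamma_bound_ZFDPC}, combine them into a two-sided bound, handle $n=1$ via the cited result of \cite{Mohammad08}, and finish with the equal-power identity $\zeta_{\pi(n)}=\rho\,\gamma_{\pi(n)}$. Your bookkeeping of which c.d.f.\ bound yields which side of the quantile bound matches the paper exactly.
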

\begin{proof}
See Appendix \ref{proof_SINR_bound_ZFDPC1}.
\end{proof}

We can now prove the following theorem (see
Appendix \ref{proof_theorem:sum_rate_GZFDPC}), which presents a key contribution:
\begin{theorem}\label{theorem:sum_rate_GZFDPC}
For a fixed number of transmit antennas $M$ and receive antennas
$N$, and fixed transmit power $P$, if the semi-orthogonality
parameter $\delta$ is chosen to satisfy (\ref{eq:Conditions}), then
the sum rate $R_{\text{ZFDPC-SUS}}$ of the proposed ZFDPC-SUS scheme
satisfies
\begin{eqnarray}\label{eq:converge_1}
\lim_{K \rightarrow \infty}
\frac{R_{\text{ZFDPC-SUS}}}{M\log_2[\rho\log K]} = 1
\end{eqnarray}
with probability 1, where $\rho = P/M$. In addition,
\begin{eqnarray}\label{eq:converge_2}
\lim_{K \rightarrow  \infty} \mathcal{E} \{R_{\text{BC}}\} -
\mathcal{E} \{ R_{\text{ZFDPC-SUS}}\} = 0,
\end{eqnarray}
where $R_{\text{BC}}$ denotes the sum rate of the MIMO broadcast
channel, achieved with DPC.  As $K\rightarrow \infty$, the average
sum rate difference between ZFDPC-SUS and DPC is no greater than
$O\big(\frac{\log \log K}{\log K} \big)$.
\end{theorem}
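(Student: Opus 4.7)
The strategy is a sandwich argument: Lemma \ref{lemma:SINR_bound_ZFDPC1} delivers the lower bound on $R_{\text{ZFDPC-SUS}}$, while the trivial majorization $R_{\text{ZFDPC-SUS}} \leq R_{\text{BC}}$ together with the known asymptotic of the MIMO broadcast sum capacity supplies the matching upper bound. Under the standing equal-power assumption $p_i = \rho = P/M$, we have $\zeta_{\pi(n)} = \rho\gamma_{\pi(n)}$ and hence $R_{\text{ZFDPC-SUS}} = \sum_{n=1}^{M}\log_2(1+\zeta_{\pi(n)})$. A union bound over the $M$ applications of Lemma \ref{lemma:SINR_bound_ZFDPC1} shows that on an event $\mathcal{A}_K$ of probability at least $1-O(1/\log K)$,
\begin{eqnarray*}
\varpi_n - \rho\log\log\sqrt{K} \; \leq \; \zeta_{\pi(n)} \; \leq \; \upsilon_n + \rho\log\log\sqrt{K}, \ \ n=1,\ldots,M .
\end{eqnarray*}
Since both $\varpi_n$ and $\upsilon_n$ equal $\rho\log K$ plus corrections of order $\log\log K$, a Taylor expansion of $\log_2(1+x)$ around $x=\rho\log K$ yields, on $\mathcal{A}_K$,
\begin{eqnarray*}
R_{\text{ZFDPC-SUS}} \; = \; M\log_2(\rho\log K) + O\!\left(\frac{\log\log K}{\log K}\right).
\end{eqnarray*}

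For (\ref{eq:converge_1}), dividing by $M\log_2(\rho\log K)$ and letting $K\to\infty$ on $\mathcal{A}_K$ gives ratio $\to 1$ in probability. The almost-sure upgrade follows by pairing this lower bound with $R_{\text{ZFDPC-SUS}}\leq R_{\text{BC}}$ and invoking the known a.s.\ scaling $R_{\text{BC}}/[M\log_2(\rho\log K)]\to 1$ for the MIMO broadcast channel; a Borel-Cantelli argument along a geometric subsequence $K_j=2^j$ (so that $\sum_j 1/\log K_j<\infty$), together with an interpolation using the monotonicity of $R_{\text{BC}}$ in the user population, closes the a.s.\ gap.

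For (\ref{eq:converge_2}), I would take expectations of the two-sided bound. On $\mathcal{A}_K$ the lower bound delivers $R_{\text{ZFDPC-SUS}} \geq M\log_2(\rho\log K) - O(\log\log K/\log K)$; on $\mathcal{A}_K^c$, because $0\leq R_{\text{ZFDPC-SUS}}\leq R_{\text{BC}}$ and $R_{\text{BC}}$ has uniformly light upper tails (exponential decay beyond its concentration window, again from the extreme-value expansion used in Lemma \ref{lemma:SINR_bound_ZFDPC1}), one obtains $\mathcal{E}\{R_{\text{BC}}\mathbf{1}_{\mathcal{A}_K^c}\}=O(\log\log K/\log K)$. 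Combining, $\mathcal{E}\{R_{\text{ZFDPC-SUS}}\} \geq M\log_2(\rho\log K) - O(\log\log K/\log K)$, and the analogous expansion of $\mathcal{E}\{R_{\text{BC}}\}$ produces identical leading and second-order terms. Hence $0 \leq \mathcal{E}\{R_{\text{BC}}\} - \mathcal{E}\{R_{\text{ZFDPC-SUS}}\} = O(\log\log K/\log K)$.

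The main obstacle is ensuring that the leading and second-order constants of the two expansions actually align. The natural candidate for a persistent $\Theta(1)$ discrepancy is the constant $\log\varepsilon_n$ (respectively $\log\epsilon_n$) inside $\varpi_n,\upsilon_n$, which must be absorbed into the known constants in the MIMO broadcast capacity expansion so that only an $O(\log\log K/\log K)$ residual survives; this requires a careful expansion of each $\log_2(1+\zeta_{\pi(n)})$ summand and use of the MIMO-BC extreme value asymptotics with $N$ receive antennas. A secondary obstacle is the a.s.\ strengthening, since $\sum_K 1/\log K$ diverges and rules out direct Borel-Cantelli, forcing the subsequence-plus-monotonicity device described above.
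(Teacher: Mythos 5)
Your proposal follows essentially the same route as the paper: Lemma \ref{lemma:SINR_bound_ZFDPC1} supplies the two-sided concentration of each $\zeta_{\pi(n)}$, the ratio claim (\ref{eq:converge_1}) follows by dividing by $\log_2[\rho\log K]$ and letting $K\to\infty$, and the expectation claim (\ref{eq:converge_2}) follows by pairing the Sharif-type upper bound $\mathcal{E}\{R_{\text{BC}}\}\leq M\log_2\big(1+\rho(\log K+O(\log\log K))\big)$ with the probabilistic lower bound on the ZFDPC-SUS rate, yielding a gap of $O\big(\tfrac{\log\log K}{\log K}\big)$. The two ``obstacles'' you flag are benign relative to the paper's own treatment: any $O(\log\log K)$ or constant-level mismatch (e.g.\ the $\log\varepsilon_n$ terms) inside the arguments of the logarithms is absorbed into the $O\big(\tfrac{\log\log K}{\log K}\big)$ residual exactly as you describe, and the paper itself passes directly from the $1-O(1/\log K)$ probability bound to the ``with probability 1'' statement without the Borel--Cantelli refinement you sketch.
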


Note that the sum rate difference convergence
(\ref{eq:converge_2}) is much stronger than the sum rate ratio
convergence in probability (\ref{eq:converge_1}), since the latter
does not preclude the existence of an infinite sum rate gap between the
proposed scheme and the optimal scheme.

\subsection{ZFBF-SUS Scheme}\label{sec:linear_analysis}
In this section, we will evaluate the performance of linear ZFBF
with SUS. For our analysis, following \cite{Yoo06}, we will assume
that the criterion (\ref{eq:selection2}) is used at each iteration
of the SUS algorithm to select the best user. In \cite{Yoo06}, it
has been proved that ZFBF-SUS can achieve the same asymptotic sum
rate scaling as DPC. Here we establish the stronger result that the
average sum rate of ZFBF-SUS converges to the average sum rate
achieved with optimal DPC, which was not established in
\cite{Yoo06}. Deriving an exact expression for the asymptotic
distribution of the output SNR for each data stream, analogous to
(\ref{eq:sinr_up_low2}), appears very difficult for ZFBF-SUS. Thus,
here we adopt a different approach, based on first applying an upper
bound which relates the output SNR of ZFBF-SUS in terms of the
output SNR of ZFDPC-SUS, and then applying results from the previous
subsection.  This leads to the following key theorem:
\begin{theorem}\label{theorem:sum_rate_GZFBF}
For a fixed number of transmit antennas $M$ and receive antennas $N$, and fixed
transmit power $P$, if the semi-orthogonality parameter $\delta$ is chosen to satisfy (\ref{eq:Conditions}), then the sum rate $\mathcal{E} \{R_{\text{ZFBF-SUS}}\}$ of the ZFBF-SUS scheme satisfies:
\begin{align} \label{eq:SumRate_Diff}
\lim_{K \rightarrow \infty}\mathcal{E} \{R_{\text{BC}}\} -
\mathcal{E} \{R_{\text{ZFBF-SUS}}\} = 0 \; .
\end{align}
As $K\rightarrow \infty$, the average sum rate difference between ZFBF-SUS and DPC is no greater than
$O\big(\frac{\log \log K}{\log K} \big)$.
\end{theorem}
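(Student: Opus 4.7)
The strategy indicated in the paragraph preceding the theorem is to relate the ZFBF-SUS per-stream SNR to the ZFDPC-SUS per-stream SNR and then re-use the asymptotic description of $\gamma_{\pi(n)}$ already developed in Section~\ref{sec:perforance_GZFDPC}. The observation that makes this feasible is that ZFBF-SUS and ZFDPC-SUS share the same user/eigen-channel selection rule (Algorithm~1), so the selected indices, the triangular factor $\mathbf{L}_{\pi,d}$, and the orthonormal basis $\mathbf{Q}_{\pi,d}$ are all identical; only the transmit post-processing differs. Hence Lemma~\ref{lemma:SINR_bound_ZFDPC1} and the auxiliary statistics derived there for $\gamma_{\pi(n)}$ transfer to the ZFBF-SUS setting without modification.

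The first step would be to lower-bound $\varrho_{\pi(i)} = \lambda_{\pi(i),d_i}/\|\mathbf{t}_i\|^2$ by a scaled version of $\gamma_{\pi(i)} = \lambda_{\pi(i),d_i}\beta_i$. Using the factorization $\mathbf{L}_{\pi,d} = \mathbf{D}(\mathbf{I} - \mathbf{N})$, where $\mathbf{D}$ is the diagonal part with $|D_{ii}|^2 = \beta_i \geq 1-(M-1)\delta$ by (\ref{eq:sum_l}), and $\mathbf{N}$ is strictly lower triangular with entries $n_{ij} = -l_{ij}/l_{ii}$ satisfying $|l_{ij}|^2 < \delta$ by the SUS acceptance rule, the Neumann expansion $\mathbf{L}_{\pi,d}^{-1} = \bigl(\sum_{k=0}^{M-1}\mathbf{N}^k\bigr)\mathbf{D}^{-1}$ yields an explicit upper bound on $\|\mathbf{t}_i\|^2$ depending only on $\delta$ and $M$. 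This produces a deterministic bound $\varrho_{\pi(i)} \geq \gamma_{\pi(i)}/\kappa(\delta,M)$ for an explicit constant $\kappa$.

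Substituting into (\ref{eq:sum_rate_ZFBF}) under the equal-power convention of Section~\ref{sec:analysis} gives $R_{\text{ZFBF-SUS}} \geq \sum_{i=1}^{M}\log_2(1 + \rho\gamma_{\pi(i)}/\kappa)$. Lemma~\ref{lemma:SINR_bound_ZFDPC1} concentrates $\gamma_{\pi(i)}$ near a value $u_i$ of order $\log K$ with probability at least $1-O(1/\log K)$, so a Taylor expansion of the logarithm around its large argument yields, on this high-probability event,
\begin{align*}
\log_2\!\bigl(1+\rho\gamma_{\pi(i)}/\kappa\bigr) = \log_2(\rho u_i) - \log_2\kappa + O\!\left(\tfrac{\log\log K}{\log K}\right).
\end{align*}
Summing over $i$, taking expectations via the same truncation argument used in Appendix~\ref{proof_theorem:sum_rate_GZFDPC}, and pairing with the matching upper bound on $\mathcal{E}\{R_{\text{BC}}\}$ that emerges from Theorem~\ref{theorem:sum_rate_GZFDPC} would deliver (\ref{eq:SumRate_Diff}) together with the claimed $O(\log\log K/\log K)$ rate.

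The hardest part will be the very first step: producing a bound on $\|\mathbf{t}_i\|^2$ sharp enough that the associated constant $\kappa(\delta,M)$ dissolves into the $O(\log\log K/\log K)$ residual rather than leaving an irreducible $\Theta(1)$ gap from DPC. Controlling this forces one to track the SUS-induced structure carefully through the Neumann expansion, exploiting both halves of (\ref{eq:Conditions}) --- the bound $\delta < 1/(M-1)$ to guarantee the expansion stays well-behaved, and $\lim K\delta^{M-1} = \infty$ to ensure that Lemma~\ref{lemma:iid_property} continues to apply at every iteration so that the extreme-value concentration of $\gamma_{\pi(n)}$ remains available to absorb the gap.
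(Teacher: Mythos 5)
Your overall strategy matches the paper's: lower-bound the ZFBF-SUS stream SNR $\varrho_{\pi(n)}$ by the ZFDPC-SUS quantity $\gamma_{\pi(n)}$ up to a multiplicative factor depending on $\delta$, then feed the concentration result of Lemma \ref{lemma:SINR_bound_ZFDPC1} and the DPC upper bound (\ref{eq:sum_rate_BC}) into the same truncation argument as in Appendix \ref{proof_theorem:sum_rate_GZFDPC}. The paper simply imports the bound $\varrho_{\pi(n)} > \gamma_{\pi(n)}/(1+e(\delta))$ with $e(\delta)=\frac{(M-1)^4\delta}{1-(M-1)\delta}$ from \cite{Yoo06} rather than re-deriving it via a Neumann expansion of $\mathbf{L}_{\pi,d}^{-1}$, but your derivation would produce a bound of the same form, $\kappa(\delta,M)=1+O(\delta)$, so up to that point the two arguments essentially coincide.

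The genuine gap is in how you dispose of the resulting constant. Your expansion leaves a term $-M\log_2\kappa(\delta,M)$ in the sum rate, and you correctly flag this as the hard part, but your proposed cure --- tracking the SUS structure ``carefully'' through the Neumann expansion while exploiting both halves of (\ref{eq:Conditions}) --- cannot work. The conditions (\ref{eq:Conditions}) explicitly permit $\delta$ to be a fixed constant (any constant below $\frac{1}{M-1}$), and for fixed $\delta$ a deterministic bound on $\|\mathbf{t}_i\|^2$ that uses only $|l_{i,j}|^2<\delta$ and $|l_{i,i}|^2>1-(i-1)\delta$ necessarily yields a fixed $\kappa>1$, hence an irreducible $\Theta(1)$ gap $M\log_2\kappa$ from DPC; no refinement of the worst-case matrix analysis changes this. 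The paper closes the argument differently: it first observes that the gap is at most $Me(\delta)+O\big(\frac{\log\log K}{\log K}\big)$, which vanishes whenever $\delta=o(1)$, and then removes the restriction $\delta=o(1)$ by a monotonicity argument --- enlarging $\delta$ enlarges every candidate set $\mathcal{U}_n$, so the average ZFBF-SUS sum rate is non-decreasing in $\delta$, and the convergence established for vanishing $\delta$ transfers to any admissible fixed $\delta$. (An alternative repair would be to show probabilistically that the selected users satisfy $\beta_{\pi(n)}\to 1$, so that the off-diagonal entries of $\mathbf{L}_{\pi,d}$ vanish and $\kappa\to 1$ regardless of $\delta$, but that is a different and more delicate argument than the deterministic one you propose.) Without one of these steps your proof establishes only $\lim_{K\to\infty}\mathcal{E}\{R_{\text{BC}}\}-\mathcal{E}\{R_{\text{ZFBF-SUS}}\}\leq M\log_2\kappa(\delta,M)$, not (\ref{eq:SumRate_Diff}).
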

\begin{proof}
See Appendix \ref{proof_theorem:sum_rate_ZF}.
\end{proof}

This result shows that, as for the ZFDPC-SUS scheme, we can
significantly reduce the complexity of the SUS search algorithm by
choosing $\delta$ reasonably small, whilst at the same time achieve
the optimal asymptotic sum rate of DPC.

\subsection{Discussion of Results} \label{sec:Comparison}

Based on the analysis above, some interesting observations are readily in
order.
\begin{enumerate}
\item
Asymptotically, both schemes can achieve the maximum spatial
multiplexing gain of $M$, and also the maximum multi-user diversity
gain up to first order (i.e.\ the SNR scales with $\log K$, and the
sum rate scales as $\log \log K$). For ZFBF, this scaling behavior
agrees with previous results \cite{Alireza08,Love08}.

\item
As shown in \emph{Theorem} \ref{theorem:sum_rate_GZFDPC} and
\emph{Theorem} \ref{theorem:sum_rate_GZFBF}, provided that the
semi-orthogonality parameter $\delta$ is selected appropriately, the
asymptotic ergodic sum rates of both schemes converge to that of the
MIMO broadcast channel,
and in both cases the difference in average sum rate with respect to
optimal DPC is no greater than $O\left(\frac{\log \log K}{\log K}
\right)$. Note that similar scaling results have also been obtained
for other user selection schemes with ZFBF\cite{Alireza08,Love08}.

\item
In contrast to most related work, our results provide key insights
into the effect of the SUS semi-orthogonality parameter $\delta$ and
the number of receive antennas $N$. Considering ZFDPC-SUS, from
(\ref{eq:sinr_up_low2}) and the expressions for $\varpi_n$ in
(\ref{eq:varpi_n}) and $\upsilon_n$ in (\ref{eq:upsilon_n}), we see
that imposing the constraint $\delta$ does \emph{not reduce the
multi-user diversity gain in both first order terms $O(\log K)$ and
second-order terms $O(\log\log K)$}. It appears that this result can
not be established based on previous (less accurate) SUS analysis
methods \cite{Yoo06}. Moreover, our analysis demonstrates that
whilst the first order terms $O(\log K)$ in the multi-user diversity
gain are unaffected by the number of receive antennas $N$, the
second-order term grows linearly with both $N$ and $M$. This is
consistent with a similar conclusion made in \cite{Mohammad08},
which considered a different system configuration. 


\item
We can also draw insights into the design of $\delta$.
For practical systems with
\emph{finite} numbers of users, obtaining the exact $\delta$ which
yields the optimal complexity--performance tradeoff remains a
challenging open problem.
However, our asymptotic analysis still provides guidance for the
implementation of practical SUS algorithms. In particular, we see
that the choice of $\delta$ is closely related to $K$ and $M$ and,
to minimize complexity, it is clearly desirable to select $\delta$
to decrease with increasing $K$. At the same time, however, for
finite numbers of users it is advisable to ``overcompensate''  and
select $\delta$ to easily meet the conditions in
(\ref{eq:Conditions}). In our numerical experiments, we found that
for systems with $M \leq 8$, the choice of $\delta = \frac{1}{\log
K}$ can work well. In addition, since the number of candidate users
decreases with each iteration of the SUS algorithm, further
complexity savings can be achieved by adaptively selecting $\delta$;
e.g., at iteration $n$, setting $\delta_n =\frac{1}{\log
|\mathcal{U}_n|}$.

\item
Although the results in Section \ref{sec:perforance_GZFDPC} and \ref{sec:linear_analysis} demonstrate that both the ZFDPC-SUS and ZFBF-SUS schemes achieve the same asymptotic average sum rate, the speed of convergence to this optimal sum rate can be very different. Intuitively, this performance difference is caused by a reduction in the \emph{effective channel gain}\cite{Yoo06} seen by the ZFBF receivers. 
Thus, for finite $K$, there will be a gap in the average sum rates
of the two schemes. We will now study this more closely.
\end{enumerate}

\section{Sum Rate Analysis -- Finite $K$}\label{sec:finite_K_performance}
In this section, we analyze the achievable sum rates of the
ZFDPC-SUS and ZFBF-SUS schemes for \emph{finite} numbers of users.
To obtain clear insights, we focus on the high and low SNR regimes.
Our analysis is based on studying the gap between the sum rates
achieved by the two transceivers and a fixed upper bound.
This study follows the method of \cite{XZhang03}, which considered
single-user MIMO receivers. We will first evaluate the performance
for a given set of channel realizations, and then investigate the
average performance via simulations. We make the same assumptions as
stated at the beginning of Section \ref{sec:analysis}.

Given a set of $M$ users $\pi$ determined by user
selection\footnote{For a meaningful comparison, we will assume that
for both schemes, the same SUS selection criteria is used, based on
(\ref{eq:selection2}).  As such, the active users sets and the
corresponding compound channel matrix $\mathbf{C}_{\pi, d}$ will be
the same for both schemes.}, the sum capacity of the MIMO broadcast
channel $\{\mathbf{H}_{\pi(k)}\}_{k=1}^{M}$ can be written by using
the duality of the MIMO broadcast channel and the MIMO multiple
access channel as \cite{Vishwanath03} $C_{\text{BC}}
\big(\{\mathbf{H}_{\pi(k)}\}_{k=1}^{M}, P\big) =
\max_{\sum_{k}\text{tr}\mathbf{Q}_k \leq P} \log_2
\det\bigg(\mathbf{I} + \sum_{k=1}^{M} \mathbf{H}_{\pi(k)}^H
\mathbf{Q}_{k} \mathbf{H}_{\pi(k)}\bigg)$. Since no closed-form
solution exists, it is very difficult to compare $C_{\text{BC}}
\big(\{\mathbf{H}_{\pi(k)}\}_{k=1}^{M}, P\big)$ with
$R_{\text{ZFDPC-SUS}}$ and $R_{\text{ZFBF-SUS}}$. In fact, even with
our assumption of equal power allocation, i.e. $\mathbf{Q}_k =
\frac{P}{K N} \mathbf{I} $, this problem is still difficult, due to
the complicated structure of the compound channel matrix
$\mathbf{C}_{\pi, d}$ for the ZFDPC and ZFBF schemes (see
(\ref{eq:decoded_signal})). Thus, to analyze the difference in sum
rate between $R_{\text{ZFDPC-SUS}}$ and $R_{\text{ZFBF-SUS}}$ for
finite $K$, we adopt an indirect approach and focus on
characterizing the differences between the sum
rates achieved by the two transceiver structures and $C$, 
where
$C = \log_2 \det(\mathbf{I}_M + \rho\mathbf{C}_{\pi, d}\mathbf{C}_{\pi,d}^H)$ with $\rho = P/M$.

Before presenting our main results, it is worth noting that \cite[\emph{Theorem} 3]{Caire03} $\lim_{P\rightarrow \infty} C_{\text{BC}}(\mathbf{C}_{\pi, d}, P)- C
= 0$, where $C_{\text{BC}}(\mathbf{C}_{\pi, d}, P)$ denotes the sum
capacity of a MIMO broadcast system given by
(\ref{eq:decoded_signal}). Moreover, for the case  $N=1$, $\{\mathbf{H}_{\pi(k)}\}_{k=1}^{M}$
reduces to $\mathbf{C}_{\pi, d}$ and $C_{\text{BC}}
\big(\{\mathbf{H}_{\pi(k)}\}_{k=1}^{M}, P\big)$ coincides with
$C_{\text{BC}}(\mathbf{C}_{\pi,d }, P)$.
Thus, the high SNR results which we establish below correspond precisely to the gaps between the sum rates achieved by the two
transceivers and the sum capacity achieved with optimal DPC.
Define
\begin{eqnarray} \label{eq:etaKappa}
\eta_i = \sum_{j=1}^{i-1} \frac{|l_{i,j}|^2}{|l_{i,i}|^2},~~
\kappa_i = \sum_{j= i+1}^{M} \frac{|t_{j,i}|^2}{|t_{i,i}|^2},
\end{eqnarray}
where $l_{i,j}$ and $t_{i,j}$ are the $(i,j)$-th elements of
matrices $\mathbf{L}_{\pi, d}$ and $\mathbf{T}_{\pi,d }$,
respectively. Some basic manipulations of the results in
\cite{XZhang03} yield the following theorem:
\begin{theorem}\label{theorem:gaps_finite_K}
For finite number of users $K$, finite number of transmit and
receive antennas $M$ and $N$,
\begin{itemize}
\item
In the high SNR region:
\begin{eqnarray}
\label{eq:h_gap_ZFDPC_finite} C - R_{\text{ZFDPC-SUS}} &=&
\frac{1}{\rho \log 2}\sum_{i=1}^{M}
\frac{\kappa_i}{\lambda_{\pi(i)}|l_{i,i}|^2} \nonumber\\
&& + O(\rho^{-2}),\\
\label{eq:h_gap_ZFBF_finite} C - R_{\text{ZFBF-SUS}} &=&
\sum_{i=1}^{M} \log_2(1+ \kappa_i)\nonumber\\ &&+ O(\rho^{-2}).
\end{eqnarray}
\item
In the low SNR region:
\begin{eqnarray}
\label{eq:gap_ZFDPC_finite} C - R_{\text{ZFDPC-SUS}} &=&
\frac{\rho}{\log 2}\sum_{i=1}^{M} \eta_i \lambda_{\pi(i)}|l_{i,i}|^2
\nonumber\\ && +O(\rho^2),
\end{eqnarray}
\begin{eqnarray} \label{eq:gap_ZFBF_finite}
C - R_{\text{ZFBF-SUS}} &=& \frac{\rho}{\log 2}\sum_{i=1}^{M} (1+
\eta_i - \frac{1}{1+\kappa_i}) \nonumber\\
&& \times ~ \lambda_{\pi(i)}|l_{i,i}|^2 + O(\rho^2).
\end{eqnarray}
\end{itemize}
\end{theorem}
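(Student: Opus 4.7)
The plan is to express every quantity in \emph{Theorem~\ref{theorem:gaps_finite_K}} in terms of the diagonal of the triangular factor $\mathbf{L}_{\pi,d}$ and its inverse, and then carry out standard Taylor expansions at high and low SNR. The natural starting point is the factorization $\mathbf{C}_{\pi,d} = \mathbf{\Lambda}_{\pi,d}^{1/2}\mathbf{L}_{\pi,d}\mathbf{Q}_{\pi,d}$ from Section~\ref{sec:linear_processing}. Since $\mathbf{Q}_{\pi,d}$ has orthonormal rows,
\begin{equation*}
\mathbf{C}_{\pi,d}\mathbf{C}_{\pi,d}^H = \mathbf{\Lambda}_{\pi,d}^{1/2}\mathbf{L}_{\pi,d}\mathbf{L}_{\pi,d}^H\mathbf{\Lambda}_{\pi,d}^{1/2}.
\end{equation*}
Exploiting the lower-triangularity of $\mathbf{L}_{\pi,d}$ together with (\ref{eq:sum_l}) and the definitions in (\ref{eq:etaKappa}), I would extract three scalar identities that form the algebraic backbone of the argument: $\det(\mathbf{C}_{\pi,d}\mathbf{C}_{\pi,d}^H) = \prod_{i=1}^{M}\lambda_{\pi(i)}|l_{i,i}|^2$; $\text{tr}(\mathbf{C}_{\pi,d}\mathbf{C}_{\pi,d}^H) = \sum_{i=1}^{M}\lambda_{\pi(i)}|l_{i,i}|^2(1+\eta_i)$; and $\text{tr}\bigl((\mathbf{C}_{\pi,d}\mathbf{C}_{\pi,d}^H)^{-1}\bigr) = \sum_{i=1}^{M}(1+\kappa_i)/(\lambda_{\pi(i)}|l_{i,i}|^2)$. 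The last follows because $(\mathbf{L}_{\pi,d}\mathbf{L}_{\pi,d}^H)^{-1} = \mathbf{T}_{\pi,d}^H\mathbf{T}_{\pi,d}$ and, since $\mathbf{T}_{\pi,d}$ is lower triangular with $t_{i,i}=1/l_{i,i}$, its $(i,i)$-entry is exactly $(1+\kappa_i)/|l_{i,i}|^2$. The very same observation identifies $\parallel\mathbf{t}_i\parallel^2 = (1+\kappa_i)/|l_{i,i}|^2$, so the ZFBF per-stream SNR of Section~\ref{sec:linear_processing} reduces to $\varrho_{\pi(i)} = \lambda_{\pi(i)}|l_{i,i}|^2/(1+\kappa_i)$.

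Next I would apply the high-SNR expansion $\log_2\det(\mathbf{I}+\rho A) = M\log_2\rho + \log_2\det A + (\rho\log 2)^{-1}\text{tr}(A^{-1}) + O(\rho^{-2})$ valid for positive-definite $A$, together with its scalar counterpart $\log_2(1+\rho a) = \log_2(\rho a) + (\rho a\log 2)^{-1} + O(\rho^{-2})$ applied termwise to (\ref{eq:throughtput}) and (\ref{eq:sum_rate_ZFBF}). Substituting the three identities above gives $C = M\log_2\rho + \log_2\det(\mathbf{C}_{\pi,d}\mathbf{C}_{\pi,d}^H) + (\rho\log 2)^{-1}\sum_i(1+\kappa_i)/(\lambda_{\pi(i)}|l_{i,i}|^2) + O(\rho^{-2})$; the equal-power ZFDPC-SUS rate contributes the same leading two terms plus $(\rho\log 2)^{-1}\sum_i 1/(\lambda_{\pi(i)}|l_{i,i}|^2)$; and ZFBF-SUS contributes those same two leading terms minus $\sum_i\log_2(1+\kappa_i)$ plus $(\rho\log 2)^{-1}\sum_i(1+\kappa_i)/(\lambda_{\pi(i)}|l_{i,i}|^2)$. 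Subtracting each sum rate from $C$ then yields (\ref{eq:h_gap_ZFDPC_finite}) and (\ref{eq:h_gap_ZFBF_finite}).

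The low-SNR regime is handled analogously via $\log_2\det(\mathbf{I}+\rho A) = (\rho/\log 2)\text{tr}(A) + O(\rho^2)$ and $\log_2(1+\rho a) = (\rho/\log 2)a + O(\rho^2)$. These immediately produce $C = (\rho/\log 2)\sum_i\lambda_{\pi(i)}|l_{i,i}|^2(1+\eta_i) + O(\rho^2)$, $R_{\text{ZFDPC-SUS}} = (\rho/\log 2)\sum_i\lambda_{\pi(i)}|l_{i,i}|^2 + O(\rho^2)$, and $R_{\text{ZFBF-SUS}} = (\rho/\log 2)\sum_i\lambda_{\pi(i)}|l_{i,i}|^2/(1+\kappa_i) + O(\rho^2)$, from which (\ref{eq:gap_ZFDPC_finite}) and (\ref{eq:gap_ZFBF_finite}) follow after combining $1+\eta_i$ with $1/(1+\kappa_i)$ in the ZFBF case.

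The one delicate point worth highlighting is the high-SNR ZFBF case: to secure an $O(\rho^{-2})$ rather than $O(\rho^{-1})$ residual, the $(\rho\log 2)^{-1}$ corrections in $C$ and in $R_{\text{ZFBF-SUS}}$ must cancel exactly, and this cancellation is driven precisely by the identity $\text{tr}\bigl((\mathbf{C}_{\pi,d}\mathbf{C}_{\pi,d}^H)^{-1}\bigr) = \sum_i(1+\kappa_i)/(\lambda_{\pi(i)}|l_{i,i}|^2)$. Once that identification is in hand, the rest is bookkeeping, essentially the MIMO-BC analogue of the single-user analysis in \cite{XZhang03}.
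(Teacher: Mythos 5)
Your proposal is correct and follows essentially the route the paper intends: the paper itself gives no explicit proof, stating only that ``some basic manipulations of the results in \cite{XZhang03}'' yield the theorem, and your derivation is precisely that manipulation carried out in full --- the factorization $\mathbf{C}_{\pi,d}\mathbf{C}_{\pi,d}^H=\mathbf{\Lambda}_{\pi,d}^{1/2}\mathbf{L}_{\pi,d}\mathbf{L}_{\pi,d}^H\mathbf{\Lambda}_{\pi,d}^{1/2}$, the three trace/determinant identities, the identification $\|\mathbf{t}_i\|^2=(1+\kappa_i)/|l_{i,i}|^2$, and the termwise Taylor expansions under the equal-power assumption all check out, including the exact cancellation of the $O(\rho^{-1})$ terms in the ZFBF high-SNR gap. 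No gaps.
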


From these results, we can make the following conclusions.

\emph{High SNR Region:} As $\rho \to \infty$, for ZFDPC-SUS the sum
rate approaches $C$, whereas for ZFBF-SUS there is a constant sum
rate gap of $\mathcal{A} \triangleq \sum_{i=1}^{M} \log_2(1+
\kappa_i)$. This gap can be zero only when $\kappa_i = 0$, which is
a rare case corresponding to complete orthogonality between the row
vectors of $\mathbf{C}_{\pi, d}$.  Subtracting
(\ref{eq:gap_ZFDPC_finite}) from (\ref{eq:gap_ZFBF_finite}), in this
region we can also quantify the sum rate gap between ZFDPC-SUS and
ZFBF-SUS as $R_{\text{ZFDPC-SUS}} - R_{\text{ZFBF-SUS}} =
\mathcal{A} + O(\rho^{-1})$, which shows the advantage of ZFDPC-SUS
for finite $K$.

\emph{Low SNR Region:} As $\rho \to 0$, for both ZFDPC-SUS and
ZFBF-SUS, the sum rate gaps w.r.t.\ $C$ approach zero linearly with
$\rho$. Moreover, in this region we can again quantify the sum rate
gap as $R_{\text{ZFDPC-SUS}} - R_{\text{ZFBF-SUS}} =
\frac{\rho}{\log 2}\sum_{i=1}^{M} (1 - \frac{1}{1+\kappa_i})
\lambda_{\pi(i)}|l_{i,i}|^2$, which is non-negative.  It is also
worth noting that in the low SNR regime, better performance may be
achievable by transmitting with full power to only a single user,
rather than sending equal power streams to $M$ selected users. The
benefit of this approach, however, will depend not only on the SNR
value, but also on the number of users $K$.  In particular, the
benefit of this approach is expected to be most evident when $K$ is
small, for which case there will be the most disparity between the
dominant eigen-channels of the users.


\emph{Effect of SUS Parameter $\delta$:} According to the SUS algorithm, we have $|l_{i,j}|^2 < \delta$ for $i > j$, and $|l_{i,i}|^2 > 1-(i-1)\delta$.  Thus, with smaller semi-orthogonality parameter $\delta$, it is more
likely to have off-diagonal elements with smaller absolute value in
both $\mathbf{L}_{\pi,d}$ and $\mathbf{T}_{\pi,d}$ (i.e smaller
$|l_{i,j}|, i<j$ and $|t_{j,i}|, i < j$ ) and more likely to have
diagonal elements with larger absolute value in $\mathbf{L}_{\pi,d}$. From (\ref{eq:etaKappa}), these observations imply that a smaller $\delta$ leads to smaller $\eta_i$ and $\kappa_i$. In addition, 
it is easy to see that
$\eta_i |l_{i,i}|^2 = \sum_{j=1}^{i-1}|l_{i,j}|^2$ and $(1+ \eta_i) |l_{i,i}|^2 = 1$.
With these results, we see that by decreasing $\delta$, the sum rate gaps for both transceivers are
likely to decrease, for both high and low SNRs. This implies that the sum rates of
both transceivers are likely to increase, which agrees with intuition.

\begin{figure}[t]
\centering
\includegraphics[width= 0.9\columnwidth]{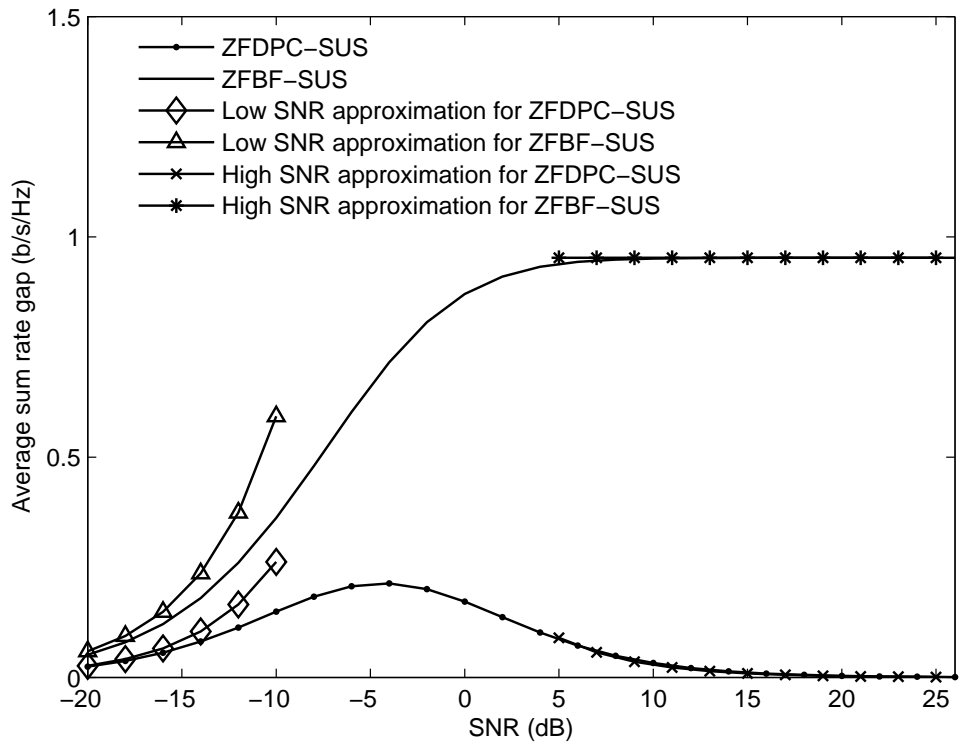}
\caption{Comparison of sum rate gap for different SNRs. $M=4$,
$N=4$, $K = 50$.} \label{fig:sum_rate_gap}
\end{figure}

Fig. \ref{fig:sum_rate_gap} demonstrates the average sum rate gaps
of ZFDPC-SUS and ZFBF-SUS for different SNRs. Results are shown for
$M = 4$, $N = 4$, $K=50$, and $\delta = \frac{1}{\log K}$. These
results confirm our analytical conclusions given above, based on
\emph{Theorem} \ref{theorem:gaps_finite_K}.

\begin{figure}[t] \centering
\includegraphics[width=0.9\columnwidth]{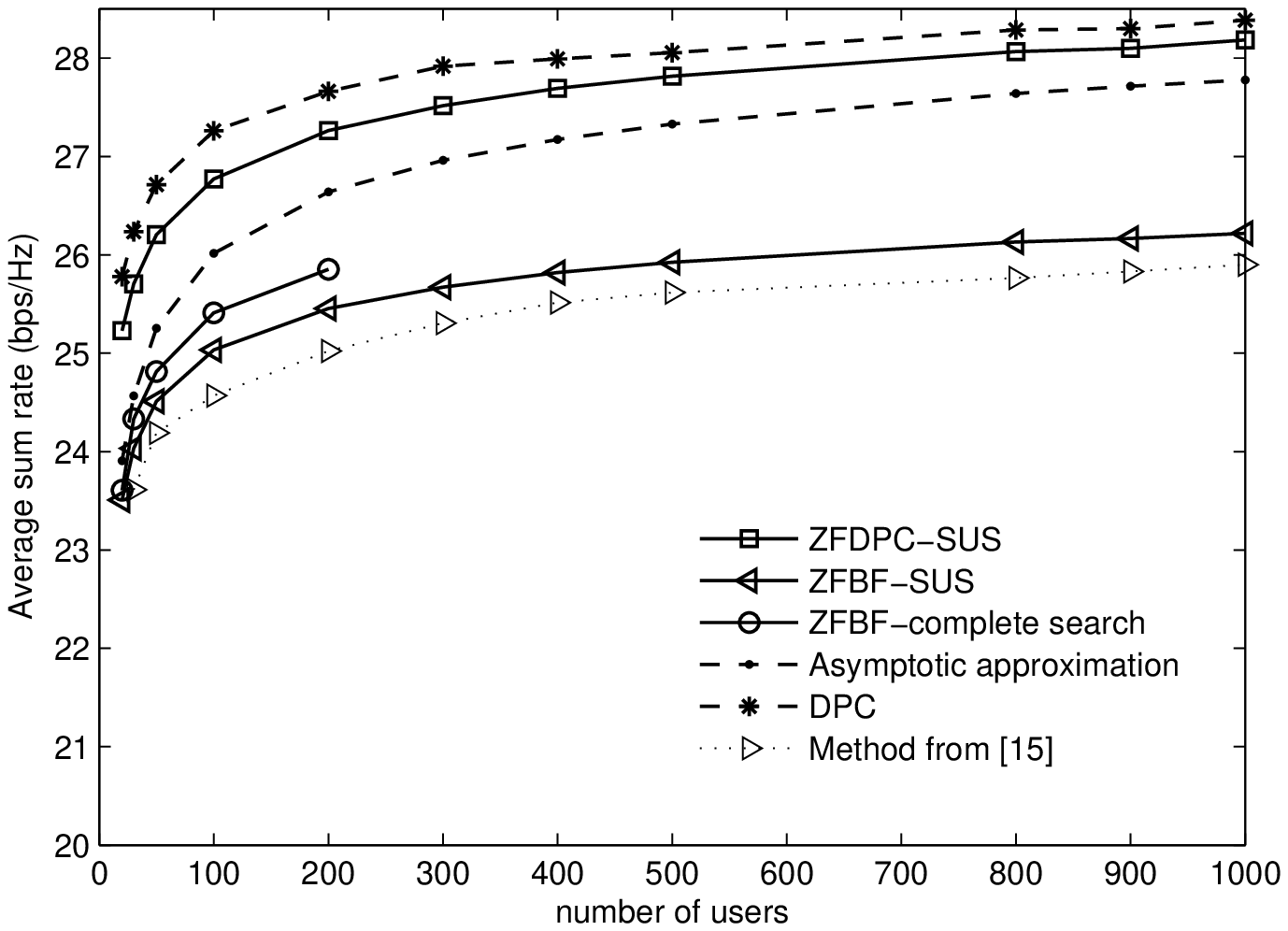}
\caption{Comparison of average sum rates for different numbers of
users. $M=4$, $N=4$, $P = 15$ dB.} \label{fig:sum_rate}
\end{figure}

\begin{figure}[t]
\centering
\includegraphics[width= 0.9\columnwidth]{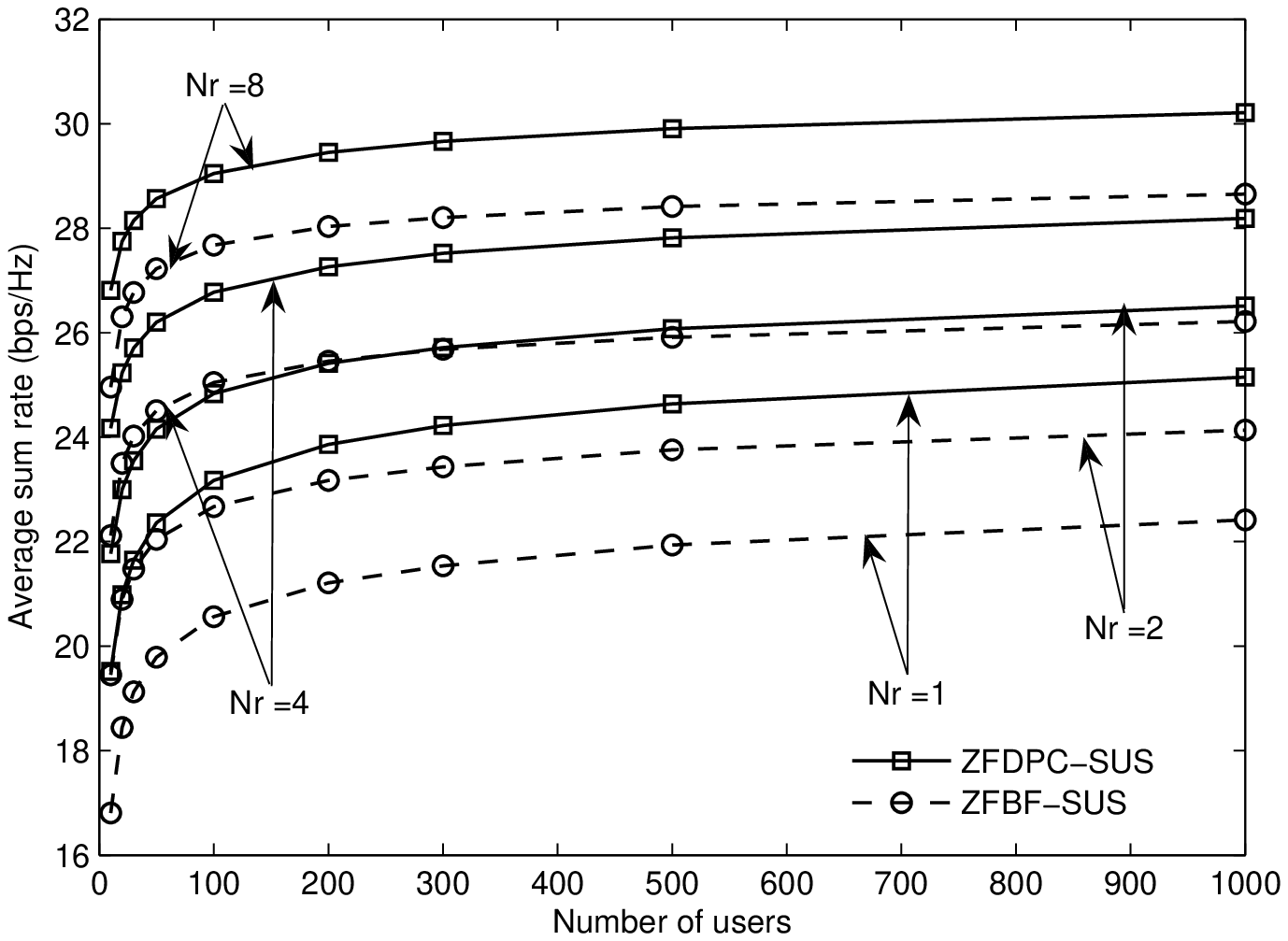}
\caption{Comparison of average sum rates for different numbers of
users and different numbers of receive antennas. $M=4$, $P = 15$
dB.} \label{fig:sum_rate_Nr}
\end{figure}

\section{Numerical Results}\label{sec:numerical_result}

For our simulations, we use $P = 15$ dB, $\delta = \frac{1}{\log K}$, and the optimal water-filling power allocation.


Fig. \ref{fig:sum_rate} plots the average sum rate achieved by
ZFDPC-SUS and ZFBF-SUS as a function of the number of users. Curves
are also presented for ZFBF with complete search, as well as optimal
DPC.  In the first case, a search is conducted over all combinations
of users, and the combination with the highest sum rate is selected.
Due to the very high complexity of this approach, we only provide
results for relatively small $K$. The optimal DPC curve acts as an
achievable upper bound, and is computed using the algorithm from
\cite{Jindal05}.  In addition, based on (\ref{eq:sinr_up_low}) and
the expressions for $u_n$ in (\ref{eq:u_n}) and $\chi_n$ in
(\ref{eq:chi_n}), we have plotted $\sum_{i=1}^{M}\log_2(1+ \rho
(\log K + (M+N-i-1)\log\log K))$ as an asymptotic approximation for
the average sum rate of the ZFDPC-SUS scheme. As evident from the
figure, the performance of ZFDPC-SUS is very close to that of DPC,
and is slowly converging to DPC as $K$ grows large. The asymptotic
approximation for ZFDPC-SUS based on our analysis is also quite good
(within $1$ bps/Hz). Considering ZFBF, we see that the ZFBF-SUS
curve is no more than $0.5$ dB away from that of the complete search
method; further verifying the utility of the SUS approach. Moreover,
the ZFBF curves are far below the ZFDPC-SUS curve, demonstrating
that ZFDPC-SUS has \emph{significant} performance advantages at
finite $K$. For further comparison, we have also implemented a
related algorithm proposed in \cite{Alireza08} and plotted the
corresponding sum rate curve. This curve is generated by
using an optimal threshold, 
computed by an
exhaustive search. The performance is close to that of ZFBF-SUS.

Fig. \ref{fig:sum_rate_Nr} compares the average sum rate of
ZFDPC-SUS and ZFBF-SUS as a function of the number of users, for
different numbers of receive antennas. Note that according to
(\ref{eq:sinr_up_low}) and the expressions for $u_n$ and $\chi_n$ in
(\ref{eq:u_n}) and  (\ref{eq:chi_n}) respectively, if we increase
the number of receive antennas by one, the increase in sum rate can
be approximated as $M \log \bigg(1+ \frac{\rho \log \log K}{1+\rho
\log K}\bigg) \rightarrow 0 $ as $K \rightarrow \infty$; i.e., the
difference in sum rate will be negligible for large $K$. However,
the figure shows that this convergence is very slow, and that
increasing the number of receive antennas can significantly increase
the sum rate for finite $K$.

\section{Conclusion}

We have investigated the sum rate of two low complexity
eigenmode-based transmission techniques for the MIMO broadcast
channel, ZFDPC-SUS and ZFBF-SUS. We proved that ZFDPC-SUS can
achieve the optimal sum rate scaling of the MIMO broadcast channel,
and that the average sum rate of both techniques converges to the
average sum capacity of the MIMO broadcast channel as $K$ grows
large (albeit at different rates). We also investigated and compared
the achievable sum rates of ZFDPC-SUS and ZFBF-SUS for finite $K$,
and demonstrated that ZFDPC-SUS has significant performance
advantages. In contrast to most previous related results, our
analytical results provide important insights into the benefit of
multiple receive antennas, and the effect of the SUS algorithm.

\appendices

\section{Proof of \emph{Lemma} \ref{lemma:iid_property}}\label{proof_lemma:iid_property}

Our derivation closely follows the method of proof for \cite[\emph{Lemma} 3]{Love08}
and \cite[\emph{Lemma} 1]{Wang_report07}. For two complex vectors $\mathbf{z} =
\mathbf{z}_r + \jmath  \mathbf{z}_i$ and $\mathbf{z}^{\prime} =
\mathbf{z}^{\prime}_r + \jmath  \mathbf{z}^{\prime}_i$ with the same
dimension, we write $\mathbf{z} \preceq \mathbf{z}^{\prime}$ if
every element of $\mathbf{z}_r $ and $\mathbf{z}_i $ is less than or
equal to its counterpart in $\mathbf{z}^{\prime}_r $ and
$\mathbf{z}^{\prime}_i $, respectively. Let $\mathcal{K}_{n}$ denote
the cardinality of the candidate set $\mathcal{U}_n$. For the first
iteration, $\mathcal{K}_{1} = K$ and $\mathbf{c}_{\pi(1)}$ is the
vector with the maximum norm. For clarity of exposition, at the end of $n$-th iteration, we
relabel the eigen-channel vectors in $\mathcal{U}_n/\{\pi(n)\}$ as
$\mathbf{\tilde{c}}_1,\cdots, \mathbf{\tilde{c}}_{\mathcal{K}_n-1}$.

We find that the result in \cite[\emph{Lemma} 1]{Wang_report07},
which was derived specifically for Gaussian vectors, holds more
generally and does not require the Gaussian assumption, and indeed
can also be adapted to our case. The proof is based on induction.
For the first iteration, we have
\begin{eqnarray}
&&\hspace{-1cm}\text{Pr}\{\mathbf{\tilde{c}}_1 \preceq \mathbf{z}_1,
\cdots, \mathbf{\tilde{c}}_{K-1} \preceq \mathbf{z}_{K-1}|
\mathbf{c}_{\pi
(1)} = \mathbf{z}_{(1)}\} \nonumber\\
&=& \prod_{i=1}^{K-1} \text{Pr} \{\mathbf{\tilde{c}}_i  \preceq
\mathbf{z}_i| \|\mathbf{\tilde{c}}_i\|< \|\mathbf{z}_{(1)}\| \}
\end{eqnarray}
and since $\lim_{K \rightarrow \infty} \|\mathbf{z}_{(1)}\| =
\infty $,
\begin{eqnarray}
\lim_{K \rightarrow \infty}\text{Pr} \{\mathbf{\tilde{c}}_i \preceq
\mathbf{z}_i |\|\mathbf{\tilde{c}}_i\|< \|\mathbf{z}_{(1)}\| \} =
F_{\mathbf{c}} (\mathbf{z}_i),
\end{eqnarray}
where $F_{\mathbf{c}} (\cdot)$ is the c.d.f.\ of the principal
eigen-vector of a complex Wishart matrix.

Now assume that this lemma holds up to the $(n-1)$-th iteration and
let us consider the $n$-th iteration. Conditioned on
$\mathbf{c}_{\pi(1)}, \cdots, \mathbf{c}_{\pi(n-1)}$, according to
our assumption, the channel vectors in $\mathcal{U}_n$ are i.i.d.
and converge in distribution to the principal eigen-vector of a
complex Wishart matrix. At the end of step 3) of the $n$-th
iteration, user $\pi(n)$ is chosen. Any user $k$ in $\mathcal{U}_n$
satisfies $ \gamma_{k} (n) \leq \gamma_{\pi(n)}$. Replacing the
condition\footnote{To be more precise, we note that different
notation is used in \cite{Love08}. Our conditions
$\{\mathbf{c}_{\pi(1)}= \mathbf{z}_{(1)}, \cdots,
\mathbf{c}_{\pi(n)} = \mathbf{z}_{(n)}\}$ and $\{\mathbf{c}_{\pi(1)}
= \mathbf{z}_{(1)}, \cdots, \mathbf{c}_{\pi(n-1)} =
\mathbf{z}_{(n-1)}, \gamma_{k}(n) \leq \gamma_{\pi(n)} \}$ are
analogous to the conditions $\{\mathbf{h}_{j_{(1)}}=
\mathbf{z}_{(1)}, \cdots, \mathbf{h}_{j_{(n)}} = \mathbf{z}_{(n)}\}$
and $\{\mathbf{h}_{j_{(1)}} = \mathbf{z}_{(1)}, \cdots,
\mathbf{h}_{j_{(n-1)}} = \mathbf{z}_{(n-1)}, R_{(n)}^{\text{BF}}
\left( \mathbf{h}_{i} \right) \leq R_{(n)}^{\text{BF}} \left(
\mathbf{z}_{(n)} \right) \}$ given in \cite{Love08}.}
$\{\mathbf{c}_{\pi(1)} = \mathbf{z}_{(1)} \}$ and $\{ \|
\mathbf{\tilde{c}}_i\| \leq  \| \mathbf{z}_{(1)} \| \}$ by $\{
\mathbf{c}_{\pi(1)} = \mathbf{z}_{(1)}, \cdots,
\mathbf{c}_{\pi(n-1)} = \mathbf{z}_{(n-1)}, \mathbf{c}_{\pi(n)} =
\mathbf{z}_{(n)}\}$ and $\{ \mathbf{c}_{\pi(1)} = \mathbf{z}_{(1)},
\cdots, \mathbf{c}_{\pi(n-1)} = \mathbf{z}_{(n-1)},  \gamma_{k}(n)
\leq \gamma_{\pi(n)} \}$ respectively in the derivation in
\cite[\emph{Lemma} 1]{Wang_report07} and following the same method
as in \cite[\emph{Lemma} 1]{Wang_report07}, we can establish that
the remaining channel vectors in $\mathbf{\mathcal{U}}_n$ are
i.i.d.\ with c.d.f.
\begin{eqnarray}
&&\text{Pr} \{\mathbf{\tilde{c}}_i \preceq \mathbf{z}_i |
\mathbf{c}_{\pi(1)} = \mathbf{z}_{(1)},  \cdots,\nonumber \\
&& \hspace{1cm}\mathbf{c}_{\pi(n-1)} = \mathbf{z}_{(n-1)},
\gamma_{k} (n) \leq \gamma_{\pi(n)} \}
\end{eqnarray}
for $i = 1, \ldots, {\mathcal{K}_n-1}$. Since $\lim_{K \rightarrow \infty} \mathcal{K}_n = \infty $,
$\gamma_{\pi(n)}$ is unbounded from above, i.e.,
\begin{eqnarray}
\lim_{K \rightarrow \infty}\gamma_{\pi(n)} = \infty,
\end{eqnarray}
and we have
\begin{eqnarray}
&&\hspace{-0.5cm}\lim_{K \rightarrow \infty} \text{Pr}~
\{\mathbf{\tilde{c}}_i \preceq \mathbf{z}_i | \mathbf{c}_{\pi(1)} =
\mathbf{z}_{(1)},
\cdots,\nonumber \\
&& \hspace{1cm} \mathbf{c}_{\pi(n-1)} = \mathbf{z}_{(n-1)},
\gamma_{k}(n) \leq \gamma_{\pi(n)} \} \nonumber \\ && = \text{Pr}
~\{\mathbf{\tilde{c}}_i \leq \mathbf{z}_i | \mathbf{c}_{\pi(1)} =
\mathbf{z}_{(1)}, \cdots, \mathbf{c}_{\pi(n-1)} =
\mathbf{z}_{(n-1)}\}.\nonumber\\
\end{eqnarray}
By induction $ \text{Pr} ~\{\mathbf{\tilde{c}}_i \preceq\mathbf{z}_i
| \mathbf{c}_{\pi(1)} = \mathbf{z}_{(1)}, \cdots,
\mathbf{c}_{\pi(n-1)} = \mathbf{z}_{(n-1)}\} $ converges in
distribution to the distribution of the principal eigen-vector of a
complex Wishart matrix, thereby establishing the lemma.

\section{Proof of \emph{Lemma} \ref{lemma:distri_mu_delta}}\label{proof_lemma:distri_mu_delta}
According to \emph{Lemma} \ref{lemma:iid_property}, the eigen-vector
$\mathbf{v}_k$, for $k\in \mathcal{U}_n$, is an isotropically
distributed unit vector on the $M$-dimensional complex unit
hypersphere. In addition, for large $K$, the subspace spanned by the
orthonormal basis $\mathbf{q}_1, \cdots, \mathbf{q}_{n-1}$ becomes
independent of $\mathbf{v}_k$. Thus, without loss of generality we
can assume $\mathbf{q}_i= \mathbf{e}_i$, where $\mathbf{e}_i$ is the
$i$-th row of the identity matrix $\mathbf{I}_{M}$. Let
$\mathbf{v}_k = [v_1, \cdots, v_M]^T$, then
\begin{eqnarray} \label{eq:ProbDesired}
\mu_n(\delta) &=& \text{Pr}\left( |\mathbf{v}_k^H \mathbf{q}_1^H
|^2< \delta,\cdots, |\mathbf{v}_k^H \mathbf{q}_{n-1}^H |^2 < \delta
\right)\nonumber\\ &=& \text{Pr}\left(|v_1|^2< \delta,\cdots,
|v_{n-1}|^2< \delta \right).
\end{eqnarray}
In the following we will first derive the joint p.d.f. of $|v_1|^2,
\cdots, |v_{n-1}|^2$.

The surface area of a complex unit hypersphere of $M$ dimensions is
$\frac{2\pi^M}{\Gamma(M)}$ \cite{Kendall1961}. So the joint p.d.f.
of $v_1,\cdots,v_M$ can be written as:
\begin{equation}
f(\mathbf{v}_k) = f(v_1, \cdots, v_M) = \left\{ \begin{array}{ll}
\frac{\Gamma(M)}{2\pi^M} , & \|\mathbf{v}_k\|=1 \\
0, &\text{otherwise}
\end{array}
\right..
\end{equation}
Define $v_i= x_{2i-1}+ \jmath x_{2i}$. Then, the joint p.d.f.\ of $x_1,\cdots,
x_{2M}$ can be expressed as:
\begin{eqnarray}\label{eq:joint_pdf_x}
f(x_1,x_2,\cdots, x_{2M})= \left\{ \begin{array}{ll}
\frac{\Gamma(M)}{2\pi^M} , & \sum_{i=1}^{2M}x_i^2 = 1 \\
0, &\text{otherwise}
\end{array}
\right..
\end{eqnarray}
We require the joint p.d.f.\ of $x_1, \cdots, x_{2(n-1)}$, which is evaluated via
\begin{eqnarray} \label{eq:jointPDF_Int}
&& \hspace{-1cm} f(x_1, \cdots, x_{2(n-1)} ) \nonumber\\
& = & {\int \cdots \int}_{\sum_{i=1}^{2M}{x_i}^2=1} f(x_1, \cdots,
x_{2M})\nonumber\\
&& \times ~{\rm d} x_{2(n-1)+1} \cdots {\rm d} x_{2M} \;
\nonumber\\
&=&  \frac{\Gamma(M)}{2\pi^M} V(x_1, \cdots, x_{2(n-1)})
\end{eqnarray}
where $V( x_1, \cdots, x_{2(n-1)} )$ denotes the area
\begin{eqnarray} \label{eq:Volume}
&& \hspace{-1cm}V( x_1, \cdots, x_{2(n-1)} ) \nonumber\\
&=& {\int\cdots
\int}_{\sum_{i=1}^{2M}{x_i}^2=1} ~{\rm d} x_{2(n-1)+1} \cdots {\rm d} x_{2M} \; \nonumber \\
&=& {\int\cdots \int}_{\sum_{i=2(n-1)+1}^{2M}x_i^2 = 1-
\sum_{i=1}^{2(n-1)}{x_i}^2}\nonumber\\
&& \times ~{\rm d} x_{2(n-1)+1} \cdots {\rm d} x_{2M} \;.
\end{eqnarray}
The multi-dimensional integral (\ref{eq:Volume}) is seen to be the
surface area of a real $(2M-2(n-1))$-dimensional hypersphere of
radius $\sqrt{1- \sum_{i=1}^{2(n-1)}{x_i}^2}$.  Thus, using results
from \cite{Kendall1961}, we evaluate this integral as follows:
\begin{eqnarray}\label{eq:int_vol}
&& \hspace{-1cm} V( x_1, \cdots, x_{2(n-1)} ) \nonumber\\
&=& \frac {2\pi^{M-n+1}}{\Gamma(M-n+1)} \left(1 -
\sum_{i=1}^{2(n-1)}{x_i}^2\right)^{\frac{2(M-n+1)-1}{2}} \nonumber\\
&& \times \sqrt{ {\det}~ \mathbf{A} } ~{\rm d} x_1 \cdots {\rm d}
x_{2(n-1)} ,
\end{eqnarray}
where $\mathbf{A}$ is a $ (2(n-1)+1) \times (2(n-1)+1)$ matrix with
$(i,j)$-th element $\mathbf{A}_{i,j}= \frac{\partial \boldsymbol{\theta}}{\partial x_i}
\cdot \frac{\partial \boldsymbol{\theta}}{\partial x_j}$
with $\boldsymbol{\theta} = \bigg(x_1,\cdots, x_{2(n-1)}, \sqrt{1-
\sum_{i=1}^{2(n-1)} x_i^2}  \bigg)^T$, and `$\cdot$' denotes the vector inner
product operation. We can compute $\mathbf{A}_{i,j} = \delta_{i,j} +
\frac{ x_i x_j} {1 - \sum_{i=1}^{2m}{x_i}^2}
$, where $ \delta_{i,j}$ is the Kronecker-delta function, and after some
manipulations obtain $\det{\mathbf{A}} = \frac{1}{1 - \sum_{i=1}^{2(n-1)}{x_i}^2}$.
Combining this result with (\ref{eq:jointPDF_Int}) and (\ref{eq:int_vol}) we obtain
\begin{eqnarray}
f(x_1, \cdots, x_{2(n-1)} )
&=& \frac{\Gamma(M)}{\Gamma(M-n+1)\pi^{n-1}} \nonumber\\
&& \hspace{-2cm} \times \left( 1 - \sum_{i=1}^{2(n-1)}{x_i}^2
\right)^{M-n}.
\end{eqnarray}
It is now convenient to make the polar coordinate transformations $x_{2i-1} = r_i \cos\theta_i$, $x_{2i}=r_i \sin{\theta_i}
$, for $i=1, \cdots, n-1$, where $r_i \geq 0$, $0 \leq\theta_i \leq 2 \pi$.
The corresponding Jacobian is easily evaluated as \cite{Kendall1961} $\left( \prod_{i=1}^{n-1}
r_i \right)^{-1}$.  So the joint density of $r_1,\cdots,r_{n-1}$ is
\begin{eqnarray}
&& \hspace{-1cm}  f(r_1,\cdots, r_{n-1}) \nonumber\\ &=&
\frac{\Gamma(M)}{\Gamma(M-n+1) \pi^{n-1}} \left( 1 -
\sum_{i=1}^{n-1} r_i^2 \right)^{M-n} \prod_{i=1}^{n-1} r_i
\nonumber\\ && \times ~ \prod_{i=1}^{n-1} \int_{0}^{2\pi} {\rm d}
\theta_i \nonumber\\
&=& \frac{2^{n-1}\Gamma(M)}{\Gamma(M-n+1)} \left(1 - \sum_{i=1}^{n-1}
r_i^2 \right)^{M-n} \prod_{i=1}^{n-1} r_i .
\end{eqnarray}
Next we apply the transformation $t_i = r_i^2$, $i = 1, \ldots,
n-1$.  Clearly $t_i = |v_i|^2$ (we will deal with $t_i$ subsequently
to simplify notation). The corresponding Jacobian is $J
(t_1,\ldots,t_{n-1}) = 1/(2^{n-1} \sqrt{t_1,\cdots, t_{n-1}})$. So
we obtain the desired joint p.d.f.\ of $t_1,\ldots,t_{n-1}$ as
\begin{align}\label{eq:joint_density}
f (t_1,\ldots,t_{n-1}) = \frac{\Gamma(M)}{\Gamma(M-n+1)} \left(1 -
\sum_{i=1}^{n-1}t_i \right)^{M-n}.
\end{align}

Armed with this result, we can now evaluate the desired probability
$\mu_n(\delta)$ in (\ref{eq:ProbDesired}). For notational
convenience, we will consider $\mu_{n+1}(\delta)$, for $n+1 \in \{
2,\cdots, M \}$. Denoting  $D_{n}=\{ 0 \leq t_1 \leq \delta,\cdots,
0 \leq t_n \leq \delta \}$, we have
\begin{eqnarray} \label{eq:prob_sel_n}
\mu_{n+1} (\delta) &=&  \int \cdots \int_{D_{n}} f \left(t_1,
\cdots,t_n \right)~{\rm d}t_1 \cdots{\rm d}t_n \nonumber \\&=&
\frac{\Gamma(M)}{\Gamma(M-n)} \varphi_{n}(1)
%
\end{eqnarray}
where we have defined
\begin{equation}
\varphi_{n}(z) = \int \cdots \int_{D_{n}} \left(z -
\sum_{i=1}^{n}{t_i} \right)^{M-n-1} ~{\rm d}t_1 \cdots{\rm d}t_n
\end{equation}
for $z \geq n \delta$. Note that with this definition,
$\varphi_{n}(1)$ exists for all $n$ provided that $\delta <
\frac{1}{M-1}$. This condition is assumed in the lemma statement.
Then $\varphi_{n}(z)$ can be written as
\begin{eqnarray}\label{eq:varphi_n}
\varphi_{n}(z) &=& \int \cdots \int_{D_{n-1}}\left(\int_{0}^{\delta}
\left(z - \sum_{i=1}^{n}{t_i}\right)^{M-n-1} ~{\rm d}t_n \right)
~{\rm d}t_1 \cdots{\rm
d}t_{n-1} \nonumber\\
&=& \frac{1}{M-n}\int \cdots \int_{D_{n-1}} \left[ \left(z -
\sum_{i=1}^{n-1}{t_i}\right)^{M-n} - \left(z - \delta -
\sum_{i=1}^{n-1}{t_i}\right)^{M-n} \right] ~{\rm d}t_1 \cdots{\rm
d}t_{n-1}\nonumber\\
&=& \frac{1}{M-n} \left( \varphi_{n-1}(z) - \varphi_{n-1}(z- \delta)  \right) .
\label{eq:iter_relation}
\end{eqnarray}
So we have
\begin{align}\label{eq:induction_assumption_k2_1}
\varphi_{n}(1) &= \frac{1}{M-n} \big( \varphi_{n-1}(1) -
\varphi_{n-1}(1-
\delta)~ \big)\\
&= \frac{1}{(M-n)(M-n+1)}\nonumber\\
& \times (\varphi_{n-2}(1) - 2 \varphi_{n-2}(1 -\delta)
+\varphi_{n-2}(1 -2\delta) ). \label{eq:induction_assumption_k2}
\end{align}

We will now prove, using mathematical induction, that
for any integer $k \in \{1, 2, \cdots, n-1\}$,
\begin{eqnarray}\label{eq:induction_assumption1}
\varphi_{n}(1) &=& \bigg[ \prod_{j=0}^{k-1}(M-n+j)\bigg]^{-1}
\nonumber\\ && \times\sum_{i=0}^{k} (-1)^i \binom{k}{i}
\varphi_{n-k}(1-i \delta) .
\end{eqnarray}
According to (\ref{eq:induction_assumption_k2_1}) and
(\ref{eq:induction_assumption_k2}), (\ref{eq:induction_assumption1})
holds for $k=1$ and $k=2$ respectively. Assuming that
(\ref{eq:induction_assumption1}) holds for integer $k$, applying
(\ref{eq:iter_relation}) in (\ref{eq:induction_assumption1}) yields
\begin{eqnarray}\label{eq:HeadEq}
\varphi_{n}(1) & =& \bigg[ \prod_{j=0}^{k}(M-n+j)\bigg]^{-1}
\sum_{i=0}^{k} (-1)^i \binom{k}{i} \bigg[ \varphi_{n-k-1}(1 -
i~\delta) - \varphi_{n-k-1}(1-(i+1)~\delta)
\bigg]\\
&=& \bigg[ \prod_{j=0}^{k}(M-n+j)\bigg]^{-1} \bigg\{
\varphi_{n-k-1}(1) +
(-1)^{k+1} \varphi_{n-k-1}(1 - (k+1)~\delta)\nonumber\\
\label{eq:MidEq} &&\vspace{1cm} + \sum_{i=0}^{k-1} (-1)^{i+1}
\binom{k+1}{i+1}
\varphi_{n-k-1}(1 - (i+1)~\delta) \bigg\} \\
&=& \bigg[ \prod_{j=0}^{k}(M-n+j)\bigg]^{-1} \sum_{i=0}^{k+1}(-1)^i
\binom{k+1}{i} \varphi_{n-k-1}(1 - i~\delta)
\label{eq:EndEq}
\end{eqnarray}
where, to obtain (\ref{eq:MidEq}), we have used $\binom{k}{i+1} =
\binom{k-1}{i} +\binom{k-1}{i+1}$. Thus, from (\ref{eq:EndEq}), if
(\ref{eq:induction_assumption1}) holds for integer $k$, it also
holds for $k+1$. By induction, (\ref{eq:induction_assumption1}) then
holds for any integer $1 \leq k < n$. Setting $k = n-1$ in
(\ref{eq:induction_assumption1}),
\begin{eqnarray} \label{eq:varPhi_Induct}
\varphi_{n}(1) & =& \bigg[ \prod_{j=0}^{n-2}(M-n+j)\bigg]^{-1}
\nonumber\\ && \hspace{-1cm}\times \sum_{i=0}^{n-1}(-1)^i
\binom{n-1}{i} \varphi_{1}(1 - i \delta) .
\end{eqnarray}
The function $\varphi_{1}(1 - i \delta)$ can be evaluated as
\begin{eqnarray} \label{eq:varPhi1_closedform}
\varphi_{1}(1 - i \delta) &=& \int_{0}^{\delta} (1- i \delta -
t_1)^{M-2} {\rm d} t_1 \nonumber\\ &=& \frac{(1- i \delta)^{M-1} -
\left(1- (i+1) \delta\right)^{M-1}}{M-1} . ~~~~~
\end{eqnarray}
Substituting (\ref{eq:varPhi1_closedform}) into
(\ref{eq:varPhi_Induct}) yields a closed-form solution, which we
simplify as follows: \begin{eqnarray} \varphi_{n}(1) & =&
\frac{\Gamma(M-n)}{\Gamma(M)} \sum_{i=0}^{n-1}(-1)^i \binom{n-1}{i}
\nonumber\\ &&\times  \left( (1- i\delta)^{M-1} -
[1- (i+1) \delta]^{M-1} \right) \nonumber\\
&=&  \frac{\Gamma(M-n)}{\Gamma(M)} \sum_{i=0}^{n} \binom{n}{i}
(-1)^{i} (1-i \delta)^{M-1} \nonumber\\
&=& \frac{\Gamma(M-n)}{\Gamma(M)} \sum_{k=0}^{M-1}  \binom{M-1}{k}
(-1)^{k} \nonumber\\ &&\times  \bigg[ \sum_{i=0}^{n} \binom{n}{i}
(-1)^{i} i^k \bigg]\delta^k .
\end{eqnarray}
Since \cite{Gradshteyn2000}
\begin{eqnarray}\label{eq:iden1}
\sum_{k = 0}^{N} \binom{N}{k} (-1)^{k} k^{(n-1)} = 0 , ~~~ 1\leq n
\leq N , \end{eqnarray} \begin{eqnarray}\sum_{k = 0}^{N}
\binom{N}{k} (-1)^{k} k^N = (-1)^N N!, ~~~ N \geq 0,
\end{eqnarray}
we obtain $\varphi_{n}(1) = \frac{\Gamma(M-n)}{\Gamma(M)}
\sum_{k=n}^{M-1} \binom{M-1}{k} (-1)^{k} \left[\sum_{i=0}^{n}
\binom{n}{i} (-1)^{i} i^k \right] \delta^k$.
Substituting
into (\ref{eq:prob_sel_n}) yields (\ref{eq:mu_delta}).

\section{Proof of \emph{Lemma} \ref{lemma:distri_beta}}
\label{proof_lemma_distri_beta} Similar to the proof of \emph{Lemma}
\ref{lemma:distri_mu_delta}, we assume $\mathbf{q}_i = \mathbf{e}_i$
without loss of generality. Then the numerator of
(\ref{eq:beta_dist}) is given by
\begin{align}\label{eq:prob_numerator}
& \text{Pr} \left( \sum_{i=1}^{n-1} |\mathbf{v}_k^H \mathbf{q}_i^H |^2
\leq 1- x , |\mathbf{v}_k^H \mathbf{q}_1^H |^2< \delta,\cdots,
|\mathbf{v}_k^H \mathbf{q}_{n-1}^H |^2 < \delta \right) \nonumber \\
& \hspace*{1cm} = \text{Pr} \left( \sum_{i=1}^{n-1} |v_i|^2 \leq
1-x, |v_1|^2< \delta,\cdots, |v_{n-1}|^2< \delta \right) .
\end{align}
Recalling that $t_i = |v_i|^2$, $i=1,2,\cdots, n-1$, we can evaluate
(\ref{eq:prob_numerator}) using the joint p.d.f.\ $f(t_1, \ldots,
t_{n-1})$ given in (\ref{eq:joint_density}) in Appendix
\ref{proof_lemma:distri_mu_delta}. For $n=2$, we have
\begin{eqnarray} \label{eq:n2}
\text{Pr}\big(  |\mathbf{v}_k^H \mathbf{q}_1^H |^2 \leq 1 - x,
|\mathbf{v}_k^H \mathbf{q}_1^H |^2< \delta\big) &=& \left\{
\begin{array}{lll}
\int_{0}^{\delta} (M-1) \left(1 - {t_1}\right)^{M-2} {\rm d}t_1  ~~ &  x \leq 1- \delta \\
\int_{0}^{1-x} (M-1) \left(1 - {t_1} \right)^{M-2} {\rm d}t_1 ~~ &   1- \delta < x \leq 1\\
0 ~~ & x > 1
\end{array}
\right.
\end{eqnarray}
Solving the integrals in (\ref{eq:n2}) and combining the result with
(\ref{eq:mu_delta}) and (\ref{eq:beta_dist}) leads to the explicit
solution given in (\ref{eq:distri_beta_2}).
For $n > 2$, the problem is much more difficult. In this case, using
(\ref{eq:joint_density}), we obtain
\begin{align}
&\text{Pr} \left( \sum_{i=1}^{n-1} |\mathbf{v}_k^H \mathbf{q}_i^H
|^2 \leq 1- x , |\mathbf{v}_k^H \mathbf{q}_1^H |^2< \delta,\cdots,
|\mathbf{v}_k^H \mathbf{q}_{n-1}^H |^2 < \delta \right) \nonumber \\
& \hspace*{1cm} = \left\{
\begin{array}{lll}
0 ~~ & x > 1 \\
\mu_{n} (\delta)
~~ & x \leq 1-(n-1)\delta\\
\frac{\Gamma(M)}{\Gamma(M-n+1)} \int_{t_{n-1}}\cdots \int_{t_1}
\left(1-\sum_{i=1}^{n-1}t_i \right)^{M-n} {\rm d}t_1\cdots {\rm
d}t_{n-1}
~~ &   1- (n-1)\delta < x \leq 1\\
\end{array}
\right. \label{eq:FirstEq}
\end{align}
with the integration region for the remaining multi-dimensional
integral defined in the lemma statement. Combining
(\ref{eq:FirstEq}) with (\ref{eq:mu_delta}) and (\ref{eq:beta_dist})
leads to (\ref{eq:distri_beta_n}).

\section{Proof of \emph{Lemma} \ref{lemma:distri_beta_bound}}
\label{proof_lemma:distri_beta_bound} We can upper bound the c.d.f.\
(\ref{eq:distri_beta_n}), for $n \geq 2, 1- (n-1)\delta< x \leq 1$,
as follows
\begin{eqnarray}\label{eq:upper_beta_k}
F_{\beta (n)}(x) &\leq & 1 - \frac{\Gamma(M)}{\Gamma(M-n+1)
\mu_{n}(\delta)} \nonumber\\ && \hspace{-2cm} \times
\int_{0}^{\frac{1-x}{n-1}}\cdots \int_{0}^{\frac{1- x}{n-1}} \left(
1-\sum_{i=1}^{n-1}t_i \right)^{M-n} {\rm
d}t_1\cdots {\rm d}t_{n-1} \nonumber\\
&=& 1- \frac{\mu_{n}\left(\frac{1-x}{n-1}\right)}{\mu_{n}(\delta)}
\end{eqnarray}
where the second line follows from (\ref{eq:prob_sel_n}).
For $n = 2$, we have
\begin{align}\label{eq:upper_n=2}
F_{\beta (2)}(x) \leq 1 -\frac{\mu_{2}\left(1-x \right)}{\mu_{2}(\delta)} = \frac{x^{M-1}
- (1-\delta)^{M-1}}{(1-\delta)^{M-1}}
\end{align}
which is exactly the right-hand side of (\ref{eq:distri_beta_2}).

We can establish the corresponding lower bound via
\begin{eqnarray}\label{eq:lower_beta_k}
F_{\beta (n)}(x) &\geq &  1 - \frac{\Gamma(M)}{\Gamma(M-n+1)
~\mu_{n} (\delta)}\nonumber\\ && \hspace{-1cm} \times \mathop {\int
{ \cdots \int {} } }\limits_{\scriptstyle \sum_{i=1}^{n-1} t_i \leq
1-x \hfill \atop \scriptstyle t_1 \geq 0, \cdots, t_{n-1} \geq 0
\hfill} \left(1- \sum_{i=1}^{n-1} t_i \right)^{M-n} {\rm
d}t_1\cdots {\rm d}t_{n-1} \nonumber\\
&=& 1 - \frac{\Gamma(M)}{\Gamma(M-n+1) ~\mu_{n} (\delta)}
\nonumber\\ && \times \int_{0}^{1-x} (1-y)^{M-n}
\frac{y^{n-2}}{(n-2)!} ~{\rm d}
y\nonumber\\
&=& 1 - \frac{I_{1-x} (n-1, M-n+1)}{\mu_{n} (\delta)},
\end{eqnarray}
where we have used the identity\cite{Gradshteyn2000} $ \mathop {\int
\int {  \cdots \int {} } }\limits_{\scriptstyle \sum_{i=1}^{n}t_i
\leq h \hfill \atop \scriptstyle t_1 \geq 0, \cdots, t_{n} \geq 0
\hfill} ~{\rm d} t_1 \cdots ~{\rm d} t_n = \frac{h^n}{n!}$.  For
$n=2$, it is easily verified that (\ref{eq:lower_beta_k}) is equal
to (\ref{eq:upper_n=2}).

\section{Proof of \emph{Lemma} \ref{lemma:distri_DPC}}\label{proof_distri_DPC}
Recalling that for uncorrelated Wishart matrices, the eigenvalues and their corresponding eigenvectors
are independent, it follows that $\lambda_{k,\text{max}}$ is independent of $\beta_{k}(n)$, $\tilde{\beta}_{k}(n)$, and $\bar{\beta}_{k}(n)$. Thus, the c.d.f.s of ${\gamma}_{k}(n)$, $\tilde{\gamma}_{k} (n)$, and $\bar{\gamma}_{k} (n)$, can be derived as
$F_{{\gamma} (n)} (x) = \int_{0}^{\infty}
F_{{\beta}(n)}  ( x/ y ) f_{\text{max}} (y)
{\rm d} y$, $F_{\tilde{\gamma} (n)} (x) = \int_{0}^{\infty}
F_{\tilde{\beta}(n)}  ( x / y ) f_{\text{max}} (y)
{\rm d} y$, and $F_{\bar{\gamma} (n)} (x) = \int_{0}^{\infty}
F_{\bar{\beta}(n)}  ( x / y ) f_{\text{max}} (y)
{\rm d} y$ respectively,
where $f_{\text{max}} (\cdot)$ is the p.d.f. of the maximum eigenvalue of
$\mathbf{H}_k \mathbf{H}_k^H$. Together with \emph{Lemma} \ref{lemma:distri_beta_bound}, it follows trivially that
$F_{\bar{\gamma} (n)} (x) \leq F_{\gamma (n)} (x) \leq F_{\tilde{\gamma} (n)} (x)$, where the equalities hold for $n=2$.

What remains is to derive closed-form expressions for
$F_{\tilde{\gamma} (n)} (x)$ and $F_{\bar{\gamma} (n)} (x)$. First
consider $F_{\tilde{\gamma} (n)} (x)$.  Recalling
 (\ref{eq:dom_distri_beta_n}), and noting that for $ 1- (n-1)\delta <
x \leq 1$, $F_{\tilde{\beta}(n)}(x)$ can be re-expressed using
(\ref{eq:mu_delta}) as
\begin{align}\label{eq:beta_cdf}
F_{\tilde{\beta}(n)}(x) &= 1-
\frac{1}{\mu_{n}(\delta)}\sum_{k=n-1}^{M-1} \binom{M-1}{k}(-1)^k
\nonumber\\ & \hspace{-1cm} \times  \bigg[ \sum_{i=0}^{n-1}
\binom{n-1}{i}(-1)^i \left(\frac{i}{n-1} \right)^k (1-x)^k \bigg]
\end{align}
it follows using \emph{Lemma} \ref{lemma:max_eigen_pdf} that
\begin{align}\label{eq:distri_temp}
F_{\tilde{\gamma} (n)} (x) & = F_{\text{max}} \left( \frac{x}{t}
\right) - \frac{1}{\mu_{n}(\delta)} \sum_{k=n-1}^{M-1}
\binom{M-1}{k}(-1)^k \nonumber\\
& \times\bigg[ \sum_{i=0}^{n-1} \binom{n-1}{i}(-1)^i \left(
\frac{i}{n-1} \right)^k \bigg]  \sum_{r=1}^{p}
\sum_{s=q-p}^{(N+M-2r)r} \nonumber\\
&   a_{r,s} \int_{x}^{\frac{x}{t}} \left( 1-\frac{x}{y} \right)^k
y^s e^{-ry} {\rm d} y.
\end{align}
By applying the transformation $z= \frac{y}{x}$ along with some
elementary algebraic manipulations, the remaining integral is
evaluated as
\begin{align}\label{eq:simplification1}
&\int_{x}^{\frac{x}{t}} \left( 1-\frac{x}{y} \right)^k \frac{y^s}{
e^{ry}} {\rm d} y \nonumber\\& \hspace{0.5cm}=
\sum_{j=0}^{k} \binom{k}{j} (-1)^{ k-j } r^{k-j-s-1} x^{k-j}
\nonumber\\ &  \hspace{0.5cm} \times \left[ \Gamma(j-k+s+1, rx) -
\Gamma\left(j-k+s+1, \frac{rx}{t}\right) \right]. \nonumber
\end{align}
Substituting this expression into (\ref{eq:distri_temp}), we readily
obtain the result (\ref{eq:upper_distri_DPC}). A closed-form
expression for $F_{\bar{\gamma} (n)} (x)$ can be obtained in a
similar manner, and is omitted due to space limitations.


\section{Asymptotic expansion of c.d.f.s of $\tilde{\gamma}_{k}(n)$ and $\bar{\gamma}_{k}(n)$
for large $x$} \label{app:tail_gamman}

First note that the tail behavior (large $x$) of $F_{\text{max}} (x)$ is given by \cite{Alireza08}
\begin{eqnarray}\label{eq:tail_eig}
F_{\text{max}} (x) =1- \frac{e^{-x} x^{M+N-2}}{\Gamma(M)\Gamma(N)} +
O (e^{-x} x^{M+N-3}).
\end{eqnarray}
Then, the corresponding expansion for the term $F_{\text{max}}
(\frac{x}{t})$ in both (\ref{eq:upper_distri_DPC}) and
(\ref{eq:lower_distri_DPC}) follows immediately.
In the following, we require a corresponding expansion for the
remaining terms in (\ref{eq:upper_distri_DPC}) and
(\ref{eq:lower_distri_DPC}). First consider
(\ref{eq:upper_distri_DPC}). Since the remaining terms in this case
involve the upper incomplete gamma function $\Gamma(n,x)$,
we require an asymptotic expansion for $\Gamma(n,x)$ at $x
\rightarrow \infty$. Using the definition and integrating by parts,
for large $x$ we have $\Gamma(n,x) = e^{-x} x^{n-1} [ 1 +
\frac{n-1}{x} + \frac{(n-1)(n-2)}{x^2} +\cdots]$. Since $t <1$, the
terms that decay most slowly in the summation in
(\ref{eq:upper_distri_DPC}) can be expressed as
\begin{eqnarray}\label{eq:expansion1}
\mathcal{J}_1
& = \sum_{k=n-1}^{M-1} \mathcal{C}_k \sum_{s=q-p}^{N+M-2}
\frac{a_{1,s} x^s}{e^{x}} \sum_{j=0}^{k} \frac{ \binom{k}{j}}{
(-1)^{ k-j }}\nonumber\\
& \times \bigg[ 1+ \frac{j-k+s}{x} + \frac{(j-k+s)(j-k+s-1)}{x^2} +
\cdots \bigg],
\end{eqnarray}
where 
\begin{align}
\mathcal{C}_k = \binom{M-1}{k}(-1)^k \bigg[
\sum_{i=0}^{n-1} \binom{n-1}{i}(-1)^i \left(\frac{i}{n-1} \right)^k \bigg] \; .
\end{align}

Using (\ref{eq:iden1}) we can obtain
\begin{align}
\label{eq:Ident2} \sum_{ j= 0}^{k} \binom{k}{j} (-1)^{k-j} j^{(m-1)}
&= 0 , \; \; \; \; 1 \leq m \leq k,
\nonumber\\
\sum_{ j= 0}^{k} \binom{k}{j} (-1)^{k-j} j^{k} &= k ! , \; \; \; \;
k \geq 1,
\end{align}
from which it follows that in (\ref{eq:expansion1}), $\sum_{j=0}^{k} \binom{k}{j} (-1)^{ k-j }
\frac{\prod_{v=1}^{m}(j-k+s+1-v)}{x^m} = 0$ for $1 \leq m
< k-1$, and also that $\sum_{j=0}^{k} \binom{k}{j} (-1)^{ k-j }
\frac{\prod_{v=1}^{k}(j-k+s+1-v)}{x^k} = \frac{ k!}{x^k}$.
We then have
\begin{align}
\mathcal{J}_1 = \sum_{k=n-1}^{M-1} \mathcal{C}_k
\sum_{s=q-p}^{N+M-2} \frac{ a_{1,s} x^s  k!}{e^{x}} \left(
\frac{1}{x^k} + O\left(\frac{1}{x^{k+1}} \right) \right) ,
\end{align}
which upon substituting for $\mathcal{C}_k$ and applying some manipulations using (\ref{eq:Ident2}) gives
\begin{eqnarray}
\mathcal{J}_1
&=& \frac{(M-1)! (n-1)!}{(M-n)!(n-1)^{n-1}} a_{1,M+N-2} e^{-x}
x^{M+N-n-1} \nonumber \\ && + ~O(e^{-x}x^{M+N-n-2}) \; .
\label{eq:expansion2}
\end{eqnarray}
From (\ref{eq:tail_eig}), we have $f_{\text{max}} (x) = \frac{e^{-x}
x^{N+M-2}}{\Gamma(M)\Gamma(N)} + O({e^{-x} x^{N+M-3}})$. Therefore
$a_{1,N+M-2}= \frac{1}{\Gamma(M)\Gamma(N)}$. Together with
(\ref{eq:expansion2}) and (\ref{eq:tail_eig}), we have
(\ref{eq:upper_tail_distri}).  By using a similar method, the terms
that decay most slowly in the summation in
(\ref{eq:lower_distri_DPC}) can be obtained. That result, used with
(\ref{eq:tail_eig}), yields (\ref{eq:lower_tail_distri}).

\section{Proof of \emph{Lemma}
\ref{lemma:SINR_bound_ZFDPC1}}\label{proof_SINR_bound_ZFDPC1}

Recall that $F_{\bar{\gamma}_{\pi(n)}}(x) \leq F_{{\gamma}_{\pi(n)}}(x)
\leq F_{\tilde{\gamma}_{\pi(n)}}(x)$. For $\gamma_{\pi(n)},
n \in \{ 2,\cdots, M \}$, and large $K$, with (\ref{eq:extre_gamma_til}),
$\text{Pr}\{ u_n - \log\log \sqrt K \leq \gamma_{\pi(n)}\} \geq
\text{Pr}\{ u_n - \log\log \sqrt K \leq \tilde{\gamma}_{\pi(n)}\}
\geq 1- O\bigg( \frac{1}{\log K} \bigg)$. Similarly, with (\ref{eq:extre_gamma_bar}) we have
$\text{Pr}\{  \gamma_{\pi(n)} \leq \chi_n + \log\log \sqrt K \} \geq
\text{Pr}\{ \bar{\gamma}_{\pi(n)} \leq  \chi_n + \log\log \sqrt K \}
\geq 1- O\bigg( \frac{1}{\log K} \bigg)$. Thus,
\begin{eqnarray}\label{eq:sinr_up_low}
&& \hspace{-1cm}\text{Pr}\{ u_n - \log\log \sqrt K \leq
\gamma_{\pi(n)} \leq \chi_n
+ \log\log \sqrt K  \} \nonumber\\
&& \geq 1 - O\bigg( \frac{1}{\log K} \bigg) .
\end{eqnarray}
For $n = 1$, the asymptotic distribution of
$\gamma_{\pi(n)}$ has been characterized in \cite{Mohammad08}. Using that result, along with (\ref{eq:sinr_up_low}),
the lemma follows upon noting that $\zeta_{\pi(n)} = \rho \gamma_{\pi(n)}$.

\section{Proof of \emph{Theorem}
\ref{theorem:sum_rate_GZFDPC}}\label{proof_theorem:sum_rate_GZFDPC}

Using (\ref{eq:sinr_up_low2}) we can obtain $ \text{Pr}\bigg\{
\frac{\log_2 (1 +  \varpi_n - \rho \log\log\sqrt{K})}{ \log_2{[\rho
\log K]} }  \leq \frac{ \log_2 (1+ \zeta_{\pi(n)})}{\log_2[\rho\log
K]} \leq \frac{\log_2 (1 + \upsilon_{n} + \rho\log\log\sqrt{K})}{
\log_2{[\rho \log K]} }  \bigg\} \geq 1- O \bigg(\frac{1}{\log K}
\bigg)$. Substituting (\ref{eq:varpi_n}) and (\ref{eq:upsilon_n})
and letting $K \rightarrow \infty$, the left-hand side and
right-hand side inequality within $\text{Pr} \{\cdot \}$ converge to
the same value. Thus, $\lim_{K \rightarrow \infty}  \frac{ \log_2
(1+ \zeta_{\pi(n)})}{\log_2[\rho\log K]} = 1 $ with probability 1,
and (\ref{eq:converge_1}) holds. To establish (\ref{eq:converge_2}),
we employ the following upper bound on
$\mathcal{E}\{R_{\text{BC}}\}$ derived in \cite{Sharif05}:
\begin{align}\label{eq:sum_rate_BC}
\mathcal{E} \{R_{\text{BC}}\} \leq M \log_2 \big( 1 + \rho (\log K +
O(\log \log K)) \big).
\end{align}
From \emph{Lemma} \ref{lemma:SINR_bound_ZFDPC1}, we have
$\text{Pr}\bigg\{ \log_2 (1+ \zeta_{\pi(n)}) \geq \log_2 (1 +
\varpi_{n} - \rho\log\log\sqrt{K}) \bigg\} \geq 1- O
\bigg(\frac{1}{\log K} \bigg)$.
Thus,
\begin{eqnarray}\label{eq:order_difference_GZFDPC}
&& \hspace{-1cm}\mathcal{E} \{ R_{\text{BC}}\} - \mathcal{E} \{
R_{\text{ZFDPC-SUS}} \} \nonumber\\
& \leq & M \log \big( 1 + \rho (\log K + O(\log \log K)) \big) \nonumber \\
&& - \bigg( 1 - O \bigg(\frac{1}{\log K } \bigg) \bigg)
\nonumber\\&& \times \sum_{n=1}^{M} \log \big(1 + \varpi_{n} -
\rho\log\log\sqrt{K}
\big) \nonumber\\
& \sim & \sum_{n=1}^{M} \log\bigg(1 + \frac{O(\log \log K)}{1 +
\varpi_{n} - \rho\log\log\sqrt{K}}\bigg) \nonumber\\&& + ~ O
\bigg(\frac{1}{\log K }  \bigg) M~ O(\log \log K ) \nonumber\\
& \sim & O \bigg(\frac{\log \log K}{\log K} \bigg)
\end{eqnarray}
where we have used $\log(1+x) \approx x$ for $x\ll 1$, and $x\sim y$
means $\lim_{K \rightarrow \infty} x/y = 1$.

\section{Proof of \emph{Theorem}
\ref{theorem:sum_rate_GZFBF}}\label{proof_theorem:sum_rate_ZF}

From \cite{Yoo06}, for small enough $\delta$,
$\varrho_{\pi(n)} > \frac{\gamma_{\pi(n)}}{1+ e(\delta)}$, where
$e(\delta) = \frac{(M-1)^4 \delta}{1-(M-1)\delta} $. Using this
result, together with (\ref{eq:sum_rate_BC}) and
(\ref{eq:sinr_up_low2}), and following a similar method as in
Appendix \ref{proof_theorem:sum_rate_GZFDPC}, we have
\begin{eqnarray}\label{eq:zf_difference}
\mathcal{E} \{R_{\text{BC}}\}& -& \mathcal{E}
\{R_{\text{ZFBF-SUS}}\}
\nonumber \\
& \leq & M \log \big(1+ \rho(\log K + O(\log \log K)) \big) -
\mathcal{E} \bigg\{ \sum_{n=1}^{M} \log\bigg(1+
\frac{\rho\gamma_{\pi
(n)}}{1+ e(\delta)}\bigg) \bigg\}\nonumber\\
& \leq & M \log \big(1+ \rho (\log K + O(\log \log K)) \big) -
\sum_{n=1}^{M} \bigg(1- O\bigg(\frac{1}{\log K}\bigg)\bigg)
\log\bigg(1+ \frac{\varpi_n - \rho \log\log \sqrt K}{1+e(\delta)}
\bigg)\nonumber\\
&\sim& \sum_{n=1}^{M} \log \bigg(1+ \frac{\rho\big( e(\delta) \log K
+O(\log\log K) \big)}{1+ ~\big(\varpi_n - \rho \log\log \sqrt K\big)
\sum_{i=0}^{\infty} \big(- e(\delta)\big)^i }
\bigg) + O\bigg(\frac{\log\log K}{\log K}\bigg)\nonumber\\
&\sim & M e(\delta) +O\bigg(\frac{\log\log K}{\log K}\bigg) ,
\end{eqnarray}
where we have used the fact that for small enough $\delta$,
$|e(\delta)|< 1$, thus $\frac{1}{1 + e(\delta)} =
\sum_{i=0}^{\infty} \big(- e(\delta)\big)^i$. So we can see that as
long as $e(\delta) \sim o(1)$, or equivalently $\delta \sim o(1)$,
whilst satisfying the conditions in (\ref{eq:Conditions}), the
difference will become zero as $K \rightarrow \infty$.
However, obviously ZFBF-SUS with a smaller candidate set at each
iteration (i.e., reduced $| \mathcal{U}_n |$) can not achieve more
sum rate than ZFBF-SUS with a larger candidate set at each
iteration. Thus, with larger $\delta$, there will be more candidate
users for each iteration and the average sum rate will increase, or
at least maintain. So the condition $\delta \sim o(1)$ can be
ignored, thereby establishing (\ref{eq:SumRate_Diff}). From
(\ref{eq:zf_difference}), the difference in sum rate is at most $O
\big(\frac{\log\log K}{\log K} \big)$.

\bibliographystyle{IEEEtran}
\bibliography{sunliang_bib}

\end{document}